\title{Mechanism Design for Congested Facility Location}
\author{Cheng Peng\thanks{Chinese University of Hong Kong. Email: chengpeng9660@gmail.com} \hspace{3em}
        Houyu Zhou\thanks{UNSW Sydney. Email: houyu.zhou97@gmail.com}}
\date{}
\begin{document}
\maketitle
\begin{abstract}
 This paper investigates mechanism design for congested facility location, where agents are partitioned into groups based on conflicting interests (e.g., competition for booking a basketball court in the gymnasium), and each agent’s cost increases when the facility is closer to their competitors. We analyze three types of misreporting: location only, group membership only, and both. To minimize social cost, we propose a strategyproof mechanism that achieves optimality for location-only misreporting. For group membership and combined misreporting, we demonstrate that the median mechanism yields tight approximation bounds. For minimizing maximum cost, we introduce novel strategyproof mechanisms for location-only and group-membership-only misreporting, while employing the leftmost mechanism for combined misreporting. We prove that all proposed mechanisms achieve near-tight bounds.
\end{abstract}
\thispagestyle{empty}

\vspace{15pt}

\hrule

\vspace{5pt}
{
	\setlength\columnsep{35pt}
	\setcounter{tocdepth}{2}
	\renewcommand\contentsname{Contents}
		{\small\tableofcontents}
}

\vspace{11pt}
\hrule

\newpage
\pagenumbering{arabic}

%

\section{Introduction}\label{sec:intro}

Classical facility location mechanism design addresses the strategic placement of public resources to serve agents, typically modeled as points in a metric space with preferences based solely on their distance to the facility \citep{procaccia2009approximate, chan2021survey}. However, these models overlook congestion effects, which arise from agents with conflicting interests sharing limited resources and are critical in many real-world scenarios like urban planning, network optimization, and collective decision-making. In real-world settings, agents often belong to groups based on shared traits, such as affiliations or resource needs, leading to intra-group competition \citep{simon1952formal, cioffi2014introduction}. For example, a basketball team may desire a gym close to their practice area for convenience but far from rival teams to avoid scheduling conflicts. Meanwhile, they remain indifferent to the preferences of runners or tennis players, whose interests do not overlap. Similarly, in urban environments, drivers grouped by vehicle type (e.g., electric or internal combustion engine) compete for limited parking, creating congestion within groups without affecting other groups.

To model these dynamics, we propose a mechanism design framework for congested facility location, where agents are divided into groups based on conflicting interests, and each agent’s cost includes their distance to the facility plus a penalty proportional to their group’s total benefit from the facility’s proximity, scaled by a group-specific coefficient. The social planner aims to place the facility to maximize either utilitarian welfare (minimizing total individual cost) or egalitarian welfare (minimizing maximum individual cost), while ensuring truthful reporting of locations and group memberships. This leads to a key question:

\emph{How can a social planner design a mechanism to optimally locate a facility, given agents’ reported locations and group memberships, while incentivizing truthful reporting and achieving near-optimal welfare under congestion?}

This question extends classical facility location mechanism design, where agents seek to minimize their distance to the facility \citep{procaccia2009approximate}. Recent studies have incorporated externalities to capture agent interactions. For instance, \citet{li2019facility} explored pairwise externalities, while \citet{zhou22COCOON} modeled a single group with positive externalities, and \citet{wang24externalities} extended this to multiple groups with cooperative interactions. However, these works do not address competitive intra-group congestion, which is critical in scenarios like resource allocation in smart cities or distributed systems. Our work fills this gap by introducing a novel model with negative group externalities and designing strategyproof mechanisms to address congestion effects.

\subsection{Our Contribution}
We study mechanism design in facility location problems with multiple groups, where each agent’s cost combines their distance to the facility with a negative externality from their group members, proportional to the group’s total benefit from distance to the facility (see \Cref{sec:pre} for formal definition). We consider two objectives--minimizing social cost (utilitarian welfare) and maximum cost (egalitarian welfare)--and analyze three types of misreporting: location only, group membership only, and both. Our contributions are summarized in \Cref{tab:sum}. In particular,
\begin{table}[!htb]
\centering
\begin{tabular}{l|lll}
\toprule
\multirow{2}{*}{Objective}    & \multicolumn{3}{l}{Restrict the Power of Misreport to}                     \\ 
                              & \multicolumn{1}{l|}{Location only} & \multicolumn{1}{l|}{Group only} & Both \\ \hline
\multirow{2}{*}{Social Cost}  & \multicolumn{1}{l|}{UB: 1}          & \multicolumn{1}{l|}{UB: 2}       & UB: 2 \\ 
                              & \multicolumn{1}{l|}{LB: 1}          & \multicolumn{1}{l|}{LB: 2$^*$}   & LB: 2 \\ \hline
\multirow{2}{*}{Maximum Cost} & \multicolumn{1}{l|}{UB: 2.125}      & \multicolumn{1}{l|}{UB: 1.774$^\dagger$} & UB: 3 \\ 
                              & \multicolumn{1}{l|}{LB: 2}          & \multicolumn{1}{l|}{LB: 1.094$^\ddagger$} & LB: 2 \\
\bottomrule
\end{tabular}
\caption{Summary of Results. $^*:$ the lower bound of all strategyproof mechanisms that output an agent location. $^\dagger: (29+20\sqrt{10})/54$. $^\ddagger: 2(1+2\sqrt{2})/7$.}
\label{tab:sum}
\vspace{-1em}
\end{table}
\begin{itemize}
    \item \textbf{Minimizing Social Cost:}
    \begin{itemize}
        \item \emph{Location-only misreporting}: By analyzing each agent’s contribution to the social cost, we characterize all optimal facility locations and propose a strategyproof mechanism, Res-M, which selects an optimal location, achieving an approximation ratio of 1.
        \item \emph{Group-membership-only misreporting}: We show that the strategyproof mechanism Med-M, which places the facility at the median agent location, achieves an approximation ratio of 2. We also prove a lower bound of 2 for all strategyproof mechanisms that select an agent’s location, indicating Med-M’s optimality in this class.
        \item \emph{Unrestricted misreporting}: Reusing Med-M, we demonstrate that any deterministic strategyproof mechanism has an approximation ratio of at least 2, constructed via a series of profile sets, confirming that Med-M is the best possible for this setting.
    \end{itemize}
    \item \textbf{Minimizing Maximum Cost:}
    \begin{itemize}
        \item \emph{Location-only misreporting}: We propose a strategyproof mechanism, LoF-M, which selects the facility location based on the leftmost and rightmost agents in each group, achieving an approximation ratio of 17/8. Since classical facility location mechanism design \cite{procaccia2009approximate} is a special case of our setting, the lower bound of 2 applies, making LoF-M nearly optimal.
        \item \emph{Group-membership-only misreporting}: We introduce Mid-M, a strategyproof mechanism that places the facility at the midpoint between the leftmost and rightmost agent locations, achieving an approximation ratio of \((29 + 20\sqrt{10})/54 \approx 1.774\). We also establish a lower bound of \(2(1 + 2\sqrt{2})/7 \approx 1.094\) for this setting.
        \item \emph{Unrestricted misreporting}: Using the Left-M mechanism, we achieve an approximation ratio of 3 and prove a lower bound of 2, highlighting the challenges of unrestricted misreporting.
    \end{itemize}
\end{itemize}

\subsection{Related Work}

The study of mechanism design in facility location problems was pioneered by \citet{procaccia2009approximate}, who introduced approximate mechanism design for settings where agents aim to minimize their distance to the facility. \citet{li2019facility} extended this to include pairwise externalities, though without group structures or competitive dynamics. \citet{zhou22COCOON} considered a single group with positive externalities, separating agents into participants and non-participants, while \citet{wang24externalities} generalized this to multiple groups with cooperative interactions. In contrast, our focus on negative group externalities introduces a novel competitive dynamic not addressed in prior work, with direct applications to AI-driven resource allocation.

Other works have explored mechanism design in group-based facility location problems. \citet{filos2021approximate,filos2023approximate} proposed a distributed process where groups select representative locations, and the mechanism chooses from these without knowing individual agent locations. \citet{DBLP:conf/ijcai/ZhouLC22} and \citet{DBLP:conf/aaai/ZhouCL24} studied group-fair and altruistic objectives, respectively, but neither considered negative externalities. Related fields, such as cake cutting \cite{DBLP:conf/ijcai/BranzeiPZ13,DBLP:conf/ijcai/LiZZ15}, fair allocation \cite{DBLP:conf/pricai/MishraPG22,gan2023your}, and matching \cite{sasaki1996two,pycia2023matching}, also explore externalities but do not address our specific competitive group dynamics. For a comprehensive overview, see \cite{chan2021survey}.


\section{Preliminaries}~\label{sec:pre}

Let \( N = \{1, 2, \ldots, n\} \) be a set of agents located on a normalized closed interval \( I = [0,1] \). Each agent \( i \in N \) has a location \( x_i \in I \) and belongs to a group \( g_i \in [m] \), where \( [m] = \{1, 2, \ldots, m\} \). The set of agents in group \( j \) is denoted \( G_j = \{ i \in N : g_i = j \} \), such that \( \bigcup_{j \in [m]} G_j = N \) and \( G_{j_1} \cap G_{j_2} = \emptyset \) for all \( j_1 \neq j_2 \). We denote the profile of agent \( i \) as \( r_i = (x_i, g_i) \), and the profile set of all agents as \( \mathbf{r} = \{r_1, \ldots, r_n\} \). A deterministic mechanism is a function \( f \) that maps a profile set \( \mathbf{r} \) to a facility location \( y \in I \). The distance between points \( a, b \in I \) is defined as \( d(a, b) = |a - b| \).

The cost of agent \( i \in G_{g_i} \) for a facility located at \( y \in I \) is defined as:
\[
c_i(y, \mathbf{r}) = d(y, x_i) + \alpha_{g_i} \sum_{k \in G_{g_i} \setminus \{i\}} (1 - d(y, x_k)),
\]
where \( \alpha_{g_i} \geq 0 \) is the externality factor for group \( g_i \). The first term, \( d(y, x_i) \), represents the agent’s cost of accessing the facility (e.g., travel time), while the second term captures a negative externality from intra-group competition (e.g., for limited resources like gym slots or server bandwidth).

To ensure realistic agent preferences, we introduce the \emph{Consistency of Aversion} property. Agents desire to be close to a facility, such as a community center, but face increased costs due to competition with group members. As the facility moves farther from an agent, particularly in the direction where all competitors are located, their aversion---manifested as a higher cost---should consistently grow, reflecting greater inconvenience without relief from group rivalry. For example, consider three agents in group \( G_j \) at locations \( x_1 = 0.1 \), \( x_2 = 1 \), \( x_3 = 1 \). For agent 1 at \( x_1 = 0.1 \), with competitors at \( x_2 = x_3 = 1 \), as the facility moves from \( y = 0.1 \) to \( y = 1 \), the cost increases due to the growing distance \( d(y, 0.1) \) (from 0 to 0.9) and heightened competition from group members. Beyond \( y = 1 \), the cost should continue to rise, as agent 1 becomes increasingly inconvenienced without further competitive benefit to group members, as shown in \Cref{fig:aversion}.

\begin{figure}[!htb]
\centering
\resizebox{0.6\linewidth}{!}{
\begin{tikzpicture}
\begin{axis}[
    axis lines=middle,
    xlabel={Facility Location $y$},
    ylabel={Cost for \textcolor{white!50!blue}{Agent 1}},
    domain=0:1.5,
    ymin=0, ymax=2.2,
    xmin=0, xmax=1.5,
    width=\linewidth,
    height=0.6\linewidth,
    grid=major,
    legend style={at={(0.95,0.15)}, anchor=south east, font=\small},
    samples=100
]
\addplot[black!50!green, thick, domain=0:1.5] {x <= 0.1 ? 0.1 - 0.2*x : (x <= 1 ? -0.1 + 1.8*x : 1.5 + 0.2*x)};
\addlegendentry{$\alpha=0.4$}
\addplot[black!50!red, thick, domain=0:1.5] {x <= 0.1 ? 0.1 + 0.2*x : (x <= 1 ? -0.1 + 2.2*x : 2.3 - 0.2*x)};
\addlegendentry{$\alpha=0.6$}
\addplot[only marks, mark=*, mark size=3, white!50!blue] coordinates {(0.1,0)};
\addplot[only marks, mark=*, mark size=3, black] coordinates {(0.98,0) (1.02,0)};
\end{axis}
\end{tikzpicture}
}
\caption{Cost for \textcolor{white!50!blue}{agent 1} at \( x_1 = 0.1 \), with competitors at \( x_2 = x_3 = 1 \). For \( \alpha_j = 0.4 \), the cost is non-decreasing for \( y \geq 0.1 \), \textcolor{black!50!green}{satisfying} \emph{Consistency of Aversion}; for \( \alpha_j = 0.6 \), the cost decreases for $y\ge 1$, \textcolor{black!50!red}{violating} the property.}
\label{fig:aversion}
\end{figure}
Formally, the \emph{Consistency of Aversion} property requires that when all competitors are on one side of agent \( i \) (i.e., \( x_k \geq x_i \) or \( x_k \leq x_i \) for all \( k \in G_{g_i} \setminus \{i\} \)), the cost \( c_i(y, \mathbf{r}) \) is non-decreasing as \( y \) moves further in that direction.

\begin{proposition}~\label{pro:aversion}
The cost function \( c_i(y, \mathbf{r}) = d(y, x_i) + \alpha_{g_i} \sum_{k \in G_{g_i} \setminus \{i\}} (1 - d(y, x_k)) \) satisfies the \emph{Consistency of Aversion} property if and only if \( \alpha_j \leq 1/(|G_j|-1) \) for each group \( j \in [m] \).
\end{proposition}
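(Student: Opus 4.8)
The plan is to exploit the fact that, for a fixed profile set $\mathbf{r}$, the cost $c_i(y,\mathbf{r})$ is a continuous piecewise-linear function of $y$ whose breakpoints are exactly $x_i$ together with the competitor locations $\{x_k : k \in G_{g_i}\setminus\{i\}\}$. Since ``non-decreasing'' for a continuous piecewise-linear function is equivalent to ``every linear piece has nonnegative slope,'' the whole statement reduces to a slope computation. I would fix a group $j$ with $n_j := |G_j|$ and an agent $i \in G_j$, and treat the direction as rightward without loss of generality (the leftward case follows by reflecting the interval). So assume all competitors satisfy $x_k \geq x_i$, making the relevant direction increasing $y$; following the motivating example, the property concerns the range $y \geq x_i$ (starting at the agent and moving toward the competitors).

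Next I would compute the slope on each piece. For $y > x_i$ the own-distance term $d(y,x_i)$ contributes slope $+1$, while each competitor term $\alpha_j(1 - d(y,x_k))$ contributes $+\alpha_j$ when $y < x_k$ and $-\alpha_j$ when $y > x_k$. Hence, on the piece where exactly $p$ competitors lie to the left of $y$ (and $n_j - 1 - p$ to the right), the slope is
\[
s_p = 1 + (n_j - 1 - p)\alpha_j - p\,\alpha_j = 1 + (n_j - 1 - 2p)\,\alpha_j .
\]
Because $\alpha_j \geq 0$, the quantity $s_p$ is weakly decreasing in $p$, so as $y$ sweeps rightward past each competitor the slope only drops; its minimum over all pieces with $y \geq x_i$ is attained at $p = n_j - 1$, namely $s_{\min} = 1 - (n_j - 1)\alpha_j$. (One can sanity-check this against the figure: with $n_j=3$, the post-competitor slope is $1-2\alpha_j$, which is $0.2>0$ for $\alpha_j=0.4$ and $-0.2<0$ for $\alpha_j=0.6$.)

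This yields both directions. For the \emph{if} direction, if $\alpha_j \leq 1/(n_j - 1)$ for every group, then $s_{\min} \geq 0$, so every piece has nonnegative slope and $c_i(\cdot,\mathbf{r})$ is non-decreasing in the relevant direction for every agent and every profile, establishing Consistency of Aversion. For the \emph{only if} direction, if $\alpha_j > 1/(n_j - 1)$ for some group $j$ (necessarily with $n_j \geq 2$), I would exhibit an explicit violating profile: place agent $i$ at $x_i = 0$ and all $n_j - 1$ competitors at $1/2$. Then for $y \in (1/2, 1]$ all competitors lie to the left of $y$, so the slope equals $s_{\min} = 1 - (n_j-1)\alpha_j < 0$; the cost strictly decreases while $y$ moves rightward, contradicting the property. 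Keeping the competitors in the interior guarantees the witnessing range $(1/2,1]$ stays inside $I$.

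The only genuinely delicate part is the piecewise bookkeeping: one must verify that the slope is monotone in $p$ so that the single worst piece ($p = n_j - 1$) governs monotonicity, and handle coincident competitor locations, where several breakpoints collapse but the formula $s_p$ still applies with $p$ counting competitors strictly to the left of $y$. I would also flag the degenerate case $n_j = 1$, where the competitor sum is empty, the cost reduces to $d(y,x_i)$ (trivially consistent), and the bound $1/(n_j-1)$ is read as $+\infty$, imposing no constraint—consistent with the stated characterization.
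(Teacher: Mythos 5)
Your proof is correct and follows essentially the same route as the paper's: both reduce Consistency of Aversion to checking that every linear piece of the piecewise-linear cost has the right sign of slope, and both identify the binding piece as the one where all $|G_j|-1$ competitors lie on the same side of $y$, giving slope $1-(|G_j|-1)\alpha_j$ (your $s_p$ formula is just the paper's $\operatorname{sign}(y-x_i)-\alpha_j(L(y)-R(y))$ rewritten with $p$ counting left competitors). Your explicit witness profile for the only-if direction is a small rigor improvement over the paper, which merely asserts necessity from the slope bound, but the argument is the same in substance.
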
 

We focus on the agent cost function satisfying \emph{Consistency of Aversion}. In addition, each agent is rational and tries to minimize their cost. Our goal is to design mechanisms that are strategyproof while (approximately) optimizing an objective function.

\begin{definition}
    A mechanism \( f \) is \textbf{strategyproof} (SP) if an agent can never benefit by reporting a false profile, regardless of the strategies of other agents. Formally, for any true profile set \( \mathbf{r} = \{r_1, \ldots, r_n\} \) and reported profile set \( \mathbf{r'} = \{r'_1, \ldots, r'_n\} \), we have \( c_i(f(r_i, \mathbf{r'}_{-i}), \mathbf{r}) \leq c_i(f(\mathbf{r'}), \mathbf{r}) \), where \( \mathbf{r'}_{-i} \) is the collection of reported profiles of all agents except agent \( i \).
\end{definition}

We can further discuss the three cases: 1. Restrict the power of misreport to locations only; 2. Restrict the power of misreport to group memberships only; 3. Do not restrict the power of misreport. We consider two cost objectives, minimizing the \textbf{s}ocial \textbf{c}ost, and the \textbf{m}aximum \textbf{c}ost:
\begin{equation*}
    \soc(y, \mathbf{r})=\sum_{i\in N}c_i(y, \mathbf{r})\text{ and }    \mc(y, \mathbf{r})=\max_{i\in N}c_i(y, \mathbf{r}).
\end{equation*}
We measure the performance of $f$ by approximation ratio.\footnote{In facility location mechanism design on the interval \( I = [0,1] \), locating the facility at \( y = 1/2 \) often guarantees a 2-approximation ratio for utilitarian/egalitarian welfare, driven by simple utility functions (\( 1 - d(y, x_i) \)). In our setting, this mechanism yields unbounded approximation ratios for both objectives, demonstrating that the interval setting alone does not mitigate the problem’s complexity.}
    
\begin{definition}
    We say a mechanism $f$ has an \textbf{approximation ratio} of $\rho$ for a certain objective if there exists a number $\rho$ such that for any profile set $\mathbf{r}$, the objective value achieved by $f$ is within $\rho$ times the objective value achieved by the optimal facility location.
\end{definition}

When $\forall j\in [m], \alpha_j=0$, our setting can be viewed as the classical facility location problems~\citep{procaccia2009approximate}.  In this special setting, locating the facility at the median agent location and the leftmost agent location are the best strategyproof mechanisms for the social cost and the maximum cost, respectively.

\begin{mechanism}[Median-point Mechanism (Med-M)]
    Given any profile set $\mathbf{r}$, without loss of generality we assume that $x_1\leq\dots\leq x_n$, locate the facility at $y=x_{\lceil \frac{n}{2} \rceil}$.
\end{mechanism} 

\begin{mechanism}[Leftmost-point Mechanism (Left-M)]
    Given any profile set $\mathbf{r}$, without loss of generality we assume that $x_1\leq\dots\leq x_n$, locate the facility at $y=x_1$.
\end{mechanism}

Hence, a natural question is whether Med-M and Left-M remain optimal in this more general setting. In the following sections, we first investigate minimizing the social cost and then focus on minimizing the maximum cost. Some proofs are deferred to the Appendix due to space constraints.

\section{Social Cost}~\label{sec:social}
To minimize the social cost, we first consider the performance of Med-M. Though it is not hard to show the strategyproofness of Med-M, the analysis of the approximation ratio in this new setting is obviously non-trivial. After showing that, we propose a mechanism that has a smaller approximation ratio.

\begin{theorem}~\label{thm:med-social}
    Med-M is strategyproof when both locations and group memberships are private, and has an approximation ratio of $2$.
\end{theorem}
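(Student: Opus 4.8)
The plan is to prove the two claims of \Cref{thm:med-social} separately: strategyproofness first, then the approximation ratio of $2$.

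\textbf{Strategyproofness.} The key observation is that Med-M only uses the multiset of reported locations and is completely blind to the group labels. I would exploit this decomposition of the cost. Fix an agent $i$ with true profile $r_i=(x_i,g_i)$. The externality term $\alpha_{g_i}\sum_{k\in G_{g_i}\setminus\{i\}}(1-d(y,x_k))$ depends on $y$ and on the locations of $i$'s \emph{true} groupmates, but it is not controlled by $i$'s own report: changing $g_i$ only changes which group $i$ belongs to in the computation, yet the mechanism ignores labels entirely, so $f$ is invariant to $i$'s reported group. Hence misreporting the group membership can only affect agent $i$'s outcome through whatever it does to the chosen location $y$, and since $f$ ignores the label, reporting a false group does not move $y$ at all. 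It therefore suffices to show that no agent can gain by misreporting location. For that I would reduce to the classical median argument: the externality term $\alpha_{g_i}\sum_k (1-d(y,x_k))$ is, for $i$'s true groupmates, a fixed function of $y$; combined with $d(y,x_i)$, the total cost $c_i(y,\mathbf r)$ is minimized when $y=x_i$ among all $y$ in the relevant monotone region, and by \Cref{pro:aversion} (Consistency of Aversion) the cost is monotone as $y$ moves away from $x_i$ on the side where competitors lie. The standard fact that the median is monotone and that a single agent on one side of the median cannot pull the median past its own location then shows agent $i$ cannot decrease $d(y,x_i)$ by lying, and cannot improve the externality term either since that would require moving $y$, finishing strategyproofness.

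\textbf{Approximation ratio (upper bound).} Here I would split the social cost into the distance part $\sum_i d(y,x_i)$ and the externality part. The crucial structural point is that the externality part can be rewritten by summing over groups: for group $j$ the total externality contribution is $\alpha_j \sum_{i\in G_j}\sum_{k\in G_j\setminus\{i\}}(1-d(y,x_k)) = \alpha_j (|G_j|-1)\sum_{k\in G_j}(1-d(y,x_k))$, since each $x_k$ appears $|G_j|-1$ times. Using \Cref{pro:aversion}, $\alpha_j(|G_j|-1)\le 1$, so this externality term is bounded above by $\sum_{k\in G_j}(1-d(y,x_k))$. Summing over groups gives a clean handle: the social cost is at most $\sum_i d(y,x_i) + \sum_i (1-d(y,x_i)) = \sum_i d(y,x_i)+n-\sum_i d(y,x_i)$. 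I expect the interplay between the $+d$ and $-d$ terms to be the main obstacle, because the externality term \emph{rewards} distance (it contains $-d(y,x_k)$), so moving the facility to reduce the distance part simultaneously increases the externality part, and the optimum may differ structurally from the median. The argument will compare $\soc(x_{\lceil n/2\rceil},\mathbf r)$ against $\soc(y^*,\mathbf r)$ where $y^*$ is optimal, bounding the median's distance part by $2$ times the optimal distance part via the classical median inequality, while controlling the externality discrepancy through the Consistency of Aversion bound $\alpha_j(|G_j|-1)\le 1$; the coefficient $1$ is exactly what makes the externality term's sensitivity to $y$ no larger than the distance term's, yielding the factor $2$.

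\textbf{Tightness.} To confirm the ratio is exactly $2$ and not smaller, I would construct a worst-case instance, most simply with all $\alpha_j=0$ so the problem degenerates to classical facility location, where the median mechanism is already known to have a tight approximation ratio of $2$ for social cost; placing two agents at $0$ and one agent at $1$ (or a balanced two-point configuration) realizes the bound. This shows the analysis is tight and completes the proof.
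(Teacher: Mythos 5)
Your strategyproofness argument and your regrouping of the externality term (assigning each agent the effective weight $w_i=1-\alpha_{g_i}(|G_{g_i}|-1)$ and the location-independent constant $C_i=\alpha_{g_i}(|G_{g_i}|-1)$) follow exactly the paper's route. The gap is in the final step of the approximation-ratio argument. You propose to ``bound the median's distance part by $2$ times the optimal distance part via the classical median inequality,'' but no such factor-$2$ inequality exists or is needed: the median \emph{minimizes} the unweighted sum $\sum_i d(y,x_i)$, so that part is a $1$-approximation, and the actual difficulty is that $\soc(y,\mathbf r)=\sum_i w_i d(y,x_i)+\sum_i C_i$ has \emph{heterogeneous} weights $w_i$, for which the unweighted median need not be near the weighted optimum. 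The factor $2$ comes from the constants, not the distances. Concretely, the paper shows $\soc(y,\mathbf r)-\soc(y^*,\mathbf r)\le\bigl(\sum_{x_i>y}w_i-\sum_{x_i\le y}w_i\bigr)(y^*-y)$, lower-bounds $\soc(y^*,\mathbf r)\ge\sum_{x_i\le y}w_i(y^*-y)+\sum_i C_i$, and then uses $w_i+C_i=1$ together with the fact that the median splits the agents evenly to collapse the ratio to $(1+c_{\max})/(1+c_{\min})\le 2$. Your sketch correctly identifies the tension (the externality term rewards distance) but never supplies the inequality that resolves it; ``controlling the externality discrepancy through $\alpha_j(|G_j|-1)\le 1$'' is a goal, not a step. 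Your intermediate bound $\soc(y,\mathbf r)\le n$ is true but unusable.

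Your tightness instance is also wrong: with all $\alpha_j=0$ the problem is classical facility location under the \emph{social cost}, where the median is exactly optimal (ratio $1$); the tight ratio $2$ you are recalling belongs to the maximum-cost objective. Tightness for Med-M requires diverse groups, e.g.\ $n/2$ agents at $0$ in a group with $\alpha_1(|G_1|-1)=1$ (so $w_i=0$, $C_i=1$) and $n/2$ agents at $1$ with $\alpha_2=0$: Med-M places the facility at $0$ for a social cost of $n$, while placing it at $1$ costs $n/2$. (The theorem as stated only asserts the upper bound, so tightness is optional, but the instance you give would not establish it.)
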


\begin{proof}[proof sketch]
    For the strategyproofness, we can discuss by cases: agent on the left of $y$, at $y$, and on the right of $y$. Next, we consider its approximation ratio.
    
    Given any profile set $\mathbf{r}$, without loss of generality we assume that the optimal facility location $y^*>y$ ($y^*=y$ implies the approximation ratio 1). Note that the social cost achieved by $y$ can be reformulated as
      \begin{align*}
          & \soc(y, \mathbf{r}) = \sum_{i\in N}\left (d(x_i,y)+\alpha_{g_i}\sum_{k\in G_{g_i};k\neq i}(1-d(y,x_k))\right )\\
          &= \sum_{i\in N}\Big( (1-\alpha_{g_i}(|G_{g_i}|-1))d(y,x_i)+\alpha_{g_i}(|G_{g_i}|-1)\Big ).
      \end{align*}
      Then we have
      \begin{align*}
         \soc(y,\mathbf{r})-\soc(y^*,\mathbf{r}) = \sum_{i\in N} w_i (d(y, x_i)-d(y^*, x_i)),
      \end{align*}
     where $w_i = 1-\alpha_{g_i}(|G_{g_i}|-1)$.
     For agents at or on the right of $y$, we have $d(y, x_i)-d(y^*, x_i) \le y^*-y$, where the equality holds when agent $i$ is at or on the right of $y^*$. For agents on the left of $y$, we have $d(y, x_i)-d(y^*, x_i) = y-y^*$. Hence, we futher have
     \begin{align*}
         \soc(y,\mathbf{r})-\soc(y^*,\mathbf{r}) \le (\sum_{x_i> y} w_i - \sum_{x_i\le y} w_i)(y^*-y),
     \end{align*}
    
     Moreover, the social cost incurred by $y^*$ is at least
     \begin{align*}
         \soc(y^*, \mathbf{r}) &\ge\!\! \sum_{x_i\le y} \!(w_i d(x_i, y^*) \!+\! C_i) + \sum_{x_i> y} \!(w_i d(x_i, y^*) \!+\! C_i)\\
         &\ge\! \sum_{x_i\le y} \!(w_i (y^*-y) \!+\! C_i)+\sum_{x_i> y} \!C_i,
     \end{align*}
    where $C_i = \alpha_{g_i}(|G_{g_i}|-1)$.
     Then we have the approximation ratio is at most
     \begin{align*}
        \rho \le & \frac{\sum_{x_i> y} w_i(y^*-y)+\sum_{x_i\le y} C_i + \sum_{x_i> y} C_i}{\sum_{x_i\le y} w_i (y^*-y) +  \sum_{x_i\le y} C_i + \sum_{x_i> y} C_i}\\
         \le & \frac{\sum_{x_i> y} w_i+\sum_{x_i\le y} C_i + \sum_{x_i> y} C_i}{\sum_{x_i\le y} w_i +  \sum_{x_i\le y} C_i + \sum_{x_i> y} C_i}\\
         = & \frac{\sum_{x_i> y} 1+\sum_{x_i\le y} C_i}{\sum_{x_i\le y} 1 + \sum_{x_i> y} C_i}.
     \end{align*}
    
     Let $c_{\max} = \max_{j \in [m]}C_j$ and $c_{\min} = \min_{j \in [m]}C_j$. Since $y$ is the median location, the approximation ratio is at most
     \begin{align*}
         \rho &= \frac{\sum_{x_i> y} 1+\sum_{x_i\le y} C_i}{\sum_{x_i\le y} 1 + \sum_{x_i> y} C_i} \le \frac{\frac{n}{2}+\frac{n}{2}c_{\max}}{\frac{n}{2}+\frac{n}{2}c_{\min}}=\frac{1+c_{\max}}{1+c_{\min}}.
     \end{align*}
    
     Since $\alpha_j (|G_j|-1) \in [0,1]$, we have the approximation ratio of $2$.
\end{proof}
From the proof of Theorem \ref{thm:med-social}, we can see that Med-M is optimal when all groups are identical, i.e. all groups have the same externality and size. However, when groups are diverse, Med-M achieves a larger approximation ratio. Hence, we need to leverage the group information to design mechanisms. Moreover, one can foresee that Left-M is even worse than Med-M in minimizing the social cost.

\subsection{Private Locations and Public Group Memberships}
Before showing the mechanism, let us focus on the contribution of each agent to the social cost in depth. The intuitive idea is that each agent contributes the social cost with the amount of their own cost. However, from an alternative macro perspective, the social cost contributed by each agent location can also be divided into two parts, their distance in their own cost and their distance in their group members' cost. Hence, we can alternatively regard the contribution of each agent to the social cost is 
\begin{align*}
    &d(x_i,y)+\alpha_{g_i}(|G_{g_i}|-1)(1-d(x_i,y)) \\
    =& (1-\alpha_{g_i}(|G_{g_i}|-1))d(x_i, y) + \alpha_{g_i}(|G_{g_i}|-1),
\end{align*}
which has already been used in the proof of Theorem~\ref{thm:med-social}. Let $w_i=1-\alpha_{g_i}(|G_{g_i}|-1)$, we will characterize the optimal facility location based on the above discussion. Let $L(y)$ be the set of the agents with $x_i\le y$ and $R(y)$ be the set of the  agents with $x_i>y$, respectively.
\begin{observation}
If $y_1 < y_2$, then $L(y_1) \subseteq L(y_2)$ and $R(y_2) \subseteq R(y_1)$.
\end{observation}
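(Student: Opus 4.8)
The plan is to prove the two set inclusions separately by a direct element-chase, relying only on the definitions of $L(\cdot)$ and $R(\cdot)$ together with transitivity of the order on $I$. Recall that $L(y)=\{i\in N: x_i\le y\}$ collects the agents weakly to the left of $y$, while $R(y)=\{i\in N: x_i>y\}$ collects those strictly to its right. Since each of these sets is carved out by a single threshold comparison against $y$, the intuition is simply that sliding the threshold rightward can only admit more agents into the left set and eject agents from the right set.

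For the first inclusion, I would fix an arbitrary agent $i\in L(y_1)$, so that $x_i\le y_1$. Combining this with the hypothesis $y_1<y_2$ yields $x_i\le y_1<y_2$, hence $x_i\le y_2$ and therefore $i\in L(y_2)$; since $i$ was arbitrary, $L(y_1)\subseteq L(y_2)$. For the second inclusion I would argue symmetrically: fix $i\in R(y_2)$, so $x_i>y_2$; together with $y_2>y_1$ this gives $x_i>y_2>y_1$, hence $x_i>y_1$ and $i\in R(y_1)$, establishing $R(y_2)\subseteq R(y_1)$.

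There is no real obstacle here: the statement is an immediate monotonicity fact about half-line membership, and the only ingredient beyond unfolding the definitions is the transitivity of $<$ and $\le$ on the real line. I note, however, that the asymmetric use of $\le$ for $L$ and $>$ for $R$ (so that every agent lies in exactly one of the two sets for a fixed $y$) is precisely what makes both inclusions go through in the same direction; I would take care to track which inequality is strict in each case, as this is the only place a hasty argument could slip.
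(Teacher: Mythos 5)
Your proof is correct and is exactly the definitional element-chase the paper implicitly relies on; the paper states this as an unproved observation precisely because the argument is this immediate. Nothing is missing.
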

From the observation, we can see that the function $l(y) = \sum_{i \in L(y)} w_i$ is non-decreasing on $y$.
Analogously, the function $r(y) = \sum_{i \in R(y)} w_i$ is non-increasing on $y$. Hence, we can conclude that the following sets are continuous and bounded.
\begin{align*}
Y_1 = \Big \{y \Big | \sum_{i \in L(y)} w_i <  \sum_{i \in R(y)} w_i \Big \},\\
Y_2 = \Big \{y \Big | \sum_{i \in L(y)} w_i > \sum_{i \in R(y)} w_i \Big \}.
\end{align*}
Moreover, we can see that $Y_1 \cap Y_2 = \emptyset$. Let $y_l = \sup Y_1$ and $y_r = \inf Y_2$. Then we characterize the optimal facility location.

\begin{proposition}~\label{pro:opt}
Given any agent profile $\mathbf{r} = (r_1, r_2, \ldots, r_n)$, $y^*$ is the optimal facility location if and only if $y^* \in [y_l, y_r]$.
\end{proposition}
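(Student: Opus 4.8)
The plan is to show that the social cost, viewed as a function of the facility location $y$, is convex and piecewise linear, and that its set of minimizers is exactly the interval $[y_l, y_r]$. The key object is the decomposed social cost from the proof of \Cref{thm:med-social}, namely
\begin{align*}
    \soc(y, \mathbf{r}) = \sum_{i \in N} \Big( w_i\, d(y, x_i) + C_i \Big),
\end{align*}
where $w_i = 1 - \alpha_{g_i}(|G_{g_i}|-1)$ and $C_i = \alpha_{g_i}(|G_{g_i}|-1)$ are constants independent of $y$. Since the additive constant $\sum_i C_i$ does not affect the location of the minimizer, minimizing $\soc(y,\mathbf{r})$ is equivalent to minimizing $\Phi(y) = \sum_{i \in N} w_i\, d(y, x_i)$. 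Crucially, \emph{Consistency of Aversion} (\Cref{pro:aversion}) guarantees $w_i \geq 0$ for every $i$, so $\Phi$ is a nonnegative combination of the convex functions $d(y, x_i) = |y - x_i|$, hence convex and piecewise linear in $y$.

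First I would analyze the right derivative (slope) of $\Phi$ at a point $y$ that is not an agent location. Each term $w_i|y - x_i|$ contributes slope $-w_i$ when $x_i > y$ and $+w_i$ when $x_i \le y$, so the slope of $\Phi$ is exactly
\begin{align*}
    \Phi'(y) = \sum_{i \in L(y)} w_i - \sum_{i \in R(y)} w_i = l(y) - r(y).
\end{align*}
By convexity, $y^*$ is a minimizer if and only if the slope is nonpositive just to its left and nonnegative just to its right. I would then translate this optimality condition into the sets $Y_1, Y_2$: the condition $l(y) < r(y)$ (i.e.\ $y \in Y_1$) means $\Phi$ is still strictly decreasing, so no point of $Y_1$ can be a minimizer and every minimizer must lie weakly to the right of $\sup Y_1 = y_l$; symmetrically, $l(y) > r(y)$ (i.e.\ $y \in Y_2$) means $\Phi$ is strictly increasing, so every minimizer must lie weakly to the left of $\inf Y_2 = y_r$. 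This establishes that any optimal $y^*$ satisfies $y_l \le y^* \le y_r$.

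For the converse, I would argue that throughout the interval $[y_l, y_r]$ we have $l(y) = r(y)$ (neither strict inequality holds, using that $Y_1$ and $Y_2$ are the two open conditions and the monotonicity of $l$ and $r$), so $\Phi$ has slope zero on this interval and is therefore constant and minimal there. The main obstacle I anticipate is the careful treatment of boundary behavior at agent locations and at the endpoints $y_l, y_r$ themselves: because $L(y)$ uses $x_i \le y$ while $R(y)$ uses $x_i > y$, the functions $l$ and $r$ have jump discontinuities precisely at agent locations, so $\sup Y_1$ may or may not belong to $Y_1$ and one must verify that the weak inequality $l(y) \le r(y)$ (resp.\ $l(y) \ge r(y)$) holds at the endpoints. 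Handling the edge case where $\Phi$ attains its minimum at a single point (so $y_l = y_r$) versus on a nondegenerate interval, together with the degenerate subcases where some $w_i = 0$ (an agent whose group externality is maximal), will require the most care, but each reduces to checking the sign of the piecewise-constant slope $l(y) - r(y)$ against the definitions of $y_l$ and $y_r$.
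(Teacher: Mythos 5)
Your proposal is correct and follows essentially the same route as the paper: the same decomposition $\soc(y,\mathbf{r})=\sum_i (w_i\,d(y,x_i)+C_i)$, the same piecewise slope $l(y)-r(y)$, and the same sign analysis relative to $y_l$ and $y_r$ (the convexity framing is just a repackaging of the paper's derivative argument, both resting on $w_i\ge 0$ from \emph{Consistency of Aversion}). If anything, you are slightly more thorough, since you explicitly argue the converse direction (that every point of $[y_l,y_r]$ is a minimizer), which the paper's proof leaves implicit.
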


\begin{proof}
   Given any profile set $\mathbf{r}$, the social cost of putting facility at $y$ is
\begin{align*}
    &\soc(y, \mathbf{r}) =\sum_{i \in N} \left[w_i d(x_i, y) + \alpha_{g_i}(|G_{g_i}|-1) \right]. 
\end{align*}
The derivative of this $\soc(y,\mathbf{r})$ function is
\begin{equation*}
\frac{\partial \soc(y, \mathbf{r})}{\partial y} = \sum_{i \in N} \frac{\partial c_i(y, \mathbf{r})}{\partial y} = \sum_{i \in N} \left[ w_i  \frac{\partial d(x_i, y)}{\partial y} \right]
\end{equation*}
\begin{equation*}
=-\sum_{i \in L(y)}  w_i  +\sum_{i \in R(y)}w_i.
\end{equation*} 
From the definition of $y_l$ and $y_r$, we have
\begin{equation*}
  \frac{\partial \soc(y, \mathbf{r})}{\partial y}<0 (y<y_l),  
\text{ and }
  \frac{\partial \soc(y, \mathbf{r})}{\partial y}>0 (y>y_r).  
\end{equation*}
Besides, the social cost function is continuous. We can conclude the optimal facility location is between $y_l$ and $y_r$.
\end{proof}

Observe that $\sum_{i \in L(y)} w_i$ and $\sum_{i \in R(y)} w_i$  can only change at  $x_i$, for $i = 1, 2, \ldots, n$. Hence, we have the following mechanism outputting the optimal solution, which restructures the computation of the social cost as we have discussed.

\begin{mechanism}[Restructure Mechanism (Res-M)]
    Given any agent profile $\mathbf{r} = (r_1, \ldots, r_n)$, without loss of generality, we assume that $x_1 \le x_2 \le \dots \le x_n$, output the smallest $x_i^*$ such that
    \begin{equation*}
    \sum_{i \in L(x_i^*)} w_i \ge \sum_{i \in R(x_i^*)} w_i,
    \end{equation*}
    where $w_i=1-\alpha_{g_i}(|G_{g_i}|-1)$.
\end{mechanism}

Now we show the strategyproofness of Res-M.
\begin{theorem}~\label{thm:res}
Res-M is strategyproof and optimal for minimizing the social cost when locations are private and group memberships are public.
\end{theorem}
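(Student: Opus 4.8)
The plan is to read off optimality directly from \Cref{pro:opt} and to reduce strategyproofness to a single-peakedness property of the cost functions that is guaranteed precisely by \emph{Consistency of Aversion} (\Cref{pro:aversion}).

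For \textbf{optimality}, I would show that the location $x_i^*$ returned by Res-M is exactly the left endpoint $y_l=\sup Y_1$ of the optimal interval $[y_l,y_r]$. Since $l(y)=\sum_{i\in L(y)}w_i$ is non-decreasing and $r(y)=\sum_{i\in R(y)}w_i$ is non-increasing, the difference $l(y)-r(y)$ is non-decreasing and changes only at agent locations. Res-M outputs the smallest agent location at which $l\ge r$ first holds; by minimality every $y$ strictly below $x_i^*$ satisfies the strict inequality $l(y)<r(y)$ (the sums are constant between consecutive agent locations), so the half-open stretch up to $x_i^*$ lies in $Y_1$ while $x_i^*\notin Y_1$. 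Hence $\sup Y_1=x_i^*\in[y_l,y_r]$, and \Cref{pro:opt} yields optimality.

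For \textbf{strategyproofness}, note that public group memberships fix every $w_i$, so a deviating agent $i$ can only change its reported location. Fixing the others' reports, I would first establish the standard weighted-median \emph{clamp} structure: letting $y_-$ and $y_+$ denote the outputs when $i$ reports $0$ and $1$, the induced output as a function of the report $x_i'$ equals $\min\{\max\{x_i',y_-\},y_+\}$, so the achievable facility locations are exactly $[y_-,y_+]$ and the truthful output is the projection of $x_i$ onto $[y_-,y_+]$. This follows by tracking how the balance $\sum_{x'\le y}w-\sum_{x'>y}w$ crosses zero as $i$'s weight $w_i$ switches sides at $x_i'$, together with the non-decreasing monotonicity already recorded in the Observation.

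The heart of the argument is to show each agent's cost is single-dipped with minimizer at the true location $x_i$. Writing $c_i(y)=d(y,x_i)+\alpha_{g_i}(|G_{g_i}|-1)-\alpha_{g_i}\sum_{k\ne i}d(y,x_k)$, its one-sided slope for $y>x_i$ is $1-\alpha_{g_i}\sum_k\operatorname{sgn}(y-x_k)\ge 1-\alpha_{g_i}(|G_{g_i}|-1)\ge 0$, using exactly the bound $\alpha_{g_i}(|G_{g_i}|-1)\le 1$ from \Cref{pro:aversion}; symmetrically the slope is $\le 0$ for $y<x_i$. Thus $c_i$ is non-increasing on $[0,x_i]$ and non-decreasing on $[x_i,1]$. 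Combining with the clamp: if $x_i\in[y_-,y_+]$ the truthful output is $x_i$, the global minimizer; if $x_i<y_-$ the truthful output $y_-$ minimizes the non-decreasing $c_i$ over $[y_-,y_+]$; and symmetrically for $x_i>y_+$. In every case truthful reporting attains the cost-minimizing achievable location, so no deviation helps. The main obstacle is isolating the right invariant: once one sees that \emph{Consistency of Aversion} is precisely the condition making $c_i$ single-dipped at $x_i$, the proof collapses onto the classical median argument, and verifying the clamp structure rigorously (coincident locations, the behavior at $y_\pm$, and degenerate all-zero-weight profiles) is the remaining, mostly mechanical, work.
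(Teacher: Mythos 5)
Your proposal is correct and follows essentially the same route as the paper: optimality is read off from \Cref{pro:opt} by identifying the output with $y_l$, and strategyproofness comes from the fact that a unilateral deviation can only push the weighted-median output away from the agent's true location, which cannot help because the cost is single-dipped at $x_i$. The paper's own proof is a terse two-sentence version of exactly this; your explicit clamp structure and the derivation of single-dippedness from $\alpha_{g_i}(|G_{g_i}|-1)\le 1$ are precisely the two facts it leaves implicit.
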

Hence, when agents only misreport their locations, the social cost can in fact be minimized using Res-M. Next, we consider the setting of private group memberships.

\subsection{Private Group Memberships}
In this subsection, Res-M is no longer strategyproof. Hence, we use Med-M in this setting. Since we have shown the strategyproofness and the approximation of Med-M in Theorem~\ref{thm:med-social}, we only need to show the lower bound to complete the results. We first show a lemma that helps us to derive the lower bounds.

\begin{lemma} \label{lem:partial group SP}
    If a mechanism is strategyproof, then for any set of agents at the same location, each individual cannot benefit if they misreport their group memberships simultaneously.
\end{lemma}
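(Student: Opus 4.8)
The plan is to prove Lemma~\ref{lem:partial group SP} as a direct consequence of single-agent strategyproofness, applied iteratively across the agents sharing a common location. Suppose agents $S = \{i_1, \ldots, i_s\}$ all sit at the same location $x$ under the true profile $\mathbf{r}$, and consider a simultaneous deviation in which each $i \in S$ reports some false group membership $g'_i$ while keeping its location $x$ fixed (and all agents outside $S$ report truthfully). Write $\mathbf{r}'$ for this jointly misreported profile. The goal is to show that for each $i \in S$, we have $c_i(f(\mathbf{r}), \mathbf{r}) \le c_i(f(\mathbf{r}'), \mathbf{r})$, i.e.\ no member of $S$ strictly gains from the coordinated group-membership lie.

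The key observation that makes this reduce to ordinary strategyproofness is that all agents in $S$ report the \emph{same} location $x$ in both $\mathbf{r}$ and $\mathbf{r}'$; only their claimed group labels differ. First I would set up a sequence of intermediate profiles $\mathbf{r} = \mathbf{r}^{(0)}, \mathbf{r}^{(1)}, \ldots, \mathbf{r}^{(s)} = \mathbf{r}'$, where $\mathbf{r}^{(t)}$ is obtained from $\mathbf{r}^{(t-1)}$ by switching agent $i_t$ from its truthful label to its misreported label $g'_{i_t}$ one at a time. By single-agent strategyproofness, no single agent can benefit by unilaterally deviating from truthful reporting given the reports of the others; the subtlety is that in the hybrid profiles $\mathbf{r}^{(t)}$ the ``others'' already include some misreported labels, which is exactly the ``regardless of the strategies of other agents'' clause in the definition of SP. So a telescoping argument shows each individual step cannot help a truthful agent.

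The genuine obstacle — and the reason this is stated as a lemma rather than being immediate — is that changing agent $i_t$'s reported group label alters the cost function $c_{i_t}$ itself, because $c_i$ depends on $g_i$ both through the externality coefficient $\alpha_{g_i}$ and through which other agents count as $i$'s competitors. Thus the relevant quantity in the SP inequality is the agent's \emph{true} cost $c_{i}(\cdot,\mathbf{r})$ evaluated at the two facility outputs, not a cost computed with the reported label. I would therefore be careful to invoke SP in the form $c_i(f(r_i,\mathbf{r}^{(t-1)}_{-i}),\mathbf{r}) \le c_i(f(\mathbf{r}^{(t)}),\mathbf{r})$, where $r_i$ is the \emph{true} profile of agent $i_t$ and the deviation is to the reported label; the true cost uses the agent's genuine group and genuine competitors throughout. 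The central point to verify is that because agent $i_t$'s \emph{location} is unchanged between the truthful and deviating reports, the single-deviation comparison guaranteed by SP is precisely a comparison of the agent's true cost at the two facility placements, and these chain together: summing the per-step inequalities along the path yields $c_{i_t}(f(\mathbf{r}), \mathbf{r}) \le c_{i_t}(f(\mathbf{r}'), \mathbf{r})$ for the agent deviating at step $t$, and by symmetry of the construction (reordering which agent is switched last) the same bound holds simultaneously for every $i \in S$.

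I would close by noting that the statement is what licenses the later lower-bound constructions: it lets us treat a whole cluster of co-located agents as a coalition that is collectively unable to profit from group-membership manipulation, so any SP mechanism must already be ``robust'' against such clustered deviations, which is the lever used to force the claimed lower bounds on the approximation ratio.
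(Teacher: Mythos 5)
Your overall strategy---the hybrid sequence of profiles that switches one co-located agent's group label at a time and invokes single-agent strategyproofness at each step---is exactly the construction the paper uses, and your insistence on evaluating the SP inequality with the agent's \emph{true} cost is the right instinct. However, there is a genuine gap at the chaining step. The per-step inequality at step $t$ bounds agent $i_t$'s cost, $c_{i_t}(f(\mathbf{r}^{(t-1)}),\cdot)\le c_{i_t}(f(\mathbf{r}^{(t)}),\cdot)$, while the inequality at step $t+1$ bounds agent $i_{t+1}$'s cost; these are different functions of the facility location (different groups, different externality coefficients, different competitor sets), so ``summing the per-step inequalities along the path'' does not yield a bound on any single agent's cost between the two endpoint profiles. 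Reordering so that a given agent $i$ deviates last does not repair this: the final step then compares $f(\mathbf{r}^{(s-1)})$ with $f(\mathbf{r}')$ for agent $i$, but you still must relate $f(\mathbf{r})$ to $f(\mathbf{r}^{(s-1)})$, and those profiles differ in the reports of $s-1$ \emph{other} agents, about which single-agent SP for $i$ says nothing.

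The missing idea---and the one the paper's proof relies on---is to convert each per-agent inequality into a statement about a quantity that every member of $S$ shares, namely the distance from the facility to the common location $x$. From $c_{i_t}(f(\mathbf{r}^{(t-1)}),\cdot)\le c_{i_t}(f(\mathbf{r}^{(t)}),\cdot)$ the paper deduces $d(f(\mathbf{r}^{(t-1)}),x)\le d(f(\mathbf{r}^{(t)}),x)$; these agent-independent distance inequalities do telescope across the whole sequence to give $d(f(\mathbf{r}),x)\le d(f(\mathbf{r}'),x)$, which is exactly the form of ``cannot benefit'' that the later lower-bound arguments invoke (the coalition cannot pull the facility strictly closer to $x$). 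Note that this reduction from cost to distance is itself the delicate point of the lemma, since the externality term also varies with the facility location; but without passing to some such common currency your chain does not close, so as written the proposal does not establish the statement.
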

\begin{theorem}~\label{thm:lbp}
    Any deterministic mechanism which outputs the facility at an agent location has an approximation ratio of at least $2$ for minimizing the social cost when locations are public and group memberships are private.
\end{theorem}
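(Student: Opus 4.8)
The plan is to produce, for every $\varepsilon>0$, a tiny family of profiles on which no strategyproof agent‑location mechanism can beat ratio $2-\varepsilon$. I would work on the two‑point configuration with locations $\{0,1\}$, where the reformulation behind \Cref{thm:med-social} specializes cleanly: writing $W_0,W_1$ for the total weights $\sum_i w_i$ (with $w_i=1-\alpha_{g_i}(|G_{g_i}|-1)$) of the agents reported at $0$ and at $1$, one gets $\soc(0,\mathbf r)=n-W_0$ and $\soc(1,\mathbf r)=n-W_1$, and since every agent sits at $0$ or $1$ the mechanism's output is forced into $\{0,1\}$. Concretely, fix an integer $s$ and a weight $w\in(0,1)$, place $s$ agents at $0$ and $s$ at $1$ (so $n=2s$), and let each location independently be in one of two \emph{states}: state $S$, where its agents are singletons (each weight $1$), or state $B$, where its agents form one co‑located group of common weight $w$ (realizable by $\alpha=(1-w)/(s-1)$, which respects \Cref{pro:aversion}). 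This yields four profiles $(S,S),(S,B),(B,S),(B,B)$ with $W_\ell\in\{s,sw\}$.

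The crux is to show the chosen facility is \emph{independent} of either location's state. Look at the agents at $0$: in state $S$ each has true cost $d(y,0)=y$, and in state $B$ each has true cost $C+wy$ with $C=1-w$; both are strictly increasing in $y$, so in either state the coalition strictly prefers the facility to the left. Applying \Cref{lem:partial group SP} to this same‑location coalition in the direction $S\to B$ gives $f(S,Z)\le f(B,Z)$, while applying it in the direction $B\to S$ gives $f(B,Z)\le f(S,Z)$; hence $f(S,Z)=f(B,Z)$ for each state $Z$ of location $1$. Symmetrically the agents at $1$, whose costs $1-y$ and $C+w(1-y)$ both favor the right, force $f(Z,S)=f(Z,B)$. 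Chaining these equalities around the four profiles shows $f$ takes a single common value $y^\star\in\{0,1\}$ on all of them.

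It then remains to compare with the optimum. On $(S,B)$ one has $\soc(0)=n-s=s$ and $\soc(1)=n-sw=s(2-w)$, so the optimum is $0$ with value $s$; symmetrically $(B,S)$ has optimum $1$ with value $s$, while in each case the opposite endpoint costs $s(2-w)$. Because $f$ equals the same $y^\star$ on both profiles, one of the two has the facility at the wrong endpoint and incurs ratio $s(2-w)/s=2-w$. Since this holds for every admissible $w\in(0,1)$, letting $w\to0^+$ forces the approximation ratio of any such mechanism to be at least $2$.

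I expect the delicate point to be the \emph{bidirectional} use of \Cref{lem:partial group SP}: it is essential that in both states the coalition strictly prefers the same direction, which is guaranteed precisely by $w>0$ (at $w=0$ the group agents become indifferent, only one inequality survives, and the mechanism is free to track the optimum). Equally important is that the loss genuinely stems from the restriction to agent locations—an interior point such as $y=1/2$ would give $\soc=n-(W_0+W_1)/2$ and ratio $(3-w)/2<2$ on these profiles—so the argument must invoke that the admissible outputs are exactly $\{0,1\}$. Minor checks are that the two moves $S\leftrightarrow B$ are legitimate simultaneous group misreports of a same‑location set, and that $\alpha=(1-w)/(s-1)\le 1/(s-1)$ keeps the instance inside the Consistency‑of‑Aversion regime for all $w\in(0,1)$.
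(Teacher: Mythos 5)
Your proof is correct and follows essentially the same route as the paper's: a two-point instance with half the agents at $0$ and half at $1$, where a co-located coalition toggles between singleton groups ($\alpha=0$) and a single high-externality group, and Lemma~\ref{lem:partial group SP} prevents the mechanism (whose output is restricted to $\{0,1\}$) from tracking the optimum, forcing ratio $2-w$ on one of the two mirrored profiles. Your version is in fact slightly more careful on two points where the paper is loose: the four-profile constancy argument replaces the paper's ``by symmetry assume $f$ outputs $1$'' step, and taking $w\to 0^+$ avoids the paper's use of the boundary value $\alpha_2=1/(n/2-1)$, at which the deviating agents are exactly indifferent and hence do not \emph{strictly} benefit from the misreport.
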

Hence, Med-M is the best among all strategyproof mechanisms that output an agent location. We also show the lower bound of all strategyproof mechanisms in this setting, which can be found in Appendix.

When locations are private, we will show that the Med-M is the best among all strategyproof mechanisms, not just strategyproof mechanisms that output an agent location. First, we will present a lemma to help us derive the lower bound.

\begin{lemma} \label{lem:partial group SP location}
    If a mechanism is strategyproof, then for any set of agents at the same location, each individual cannot benefit if they misreport their locations simultaneously.
\end{lemma}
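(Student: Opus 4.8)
The plan is to reduce the simultaneous location misreport of the whole block to a sequence of single-agent misreports and then invoke strategyproofness step by step, mirroring the group-membership analogue in \Cref{lem:partial group SP}. Fix a set $S$ of agents sharing a common true location $x$, let $\mathbf{r}$ be the truthful profile, and let $\tilde{\mathbf{r}}$ be the profile in which every agent of $S$ replaces $x$ by a reported location while all group memberships (public in this setting) remain fixed. Enumerate $S = \{a_1,\dots,a_s\}$ and build intermediate profiles $P_0 = \mathbf{r}, P_1,\dots,P_s = \tilde{\mathbf{r}}$, where $P_t$ is obtained from $P_{t-1}$ by letting agent $a_t$ switch from reporting $x$ to reporting its entry of $\tilde{\mathbf{r}}$. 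Since only $a_t$ changes its report between $P_{t-1}$ and $P_t$, strategyproofness applied to $a_t$ gives, with every cost still evaluated at the true profile $\mathbf{r}$,
\[
c_{a_t}\big(f(P_{t-1}),\mathbf{r}\big) \le c_{a_t}\big(f(P_t),\mathbf{r}\big).
\]

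The crucial step is the observation that makes ``same location'' pay off: any two agents $i,j\in S$ that additionally lie in the same group have identical induced preferences over facility locations, i.e.\ $c_i(\cdot,\mathbf{r}) \equiv c_j(\cdot,\mathbf{r})$ as functions of $y$. This follows by a direct cancellation in the cost formula: the self-terms $d(y,x_i)$ and $d(y,x_j)$ coincide because $x_i = x_j = x$, and in the externality sum the only discrepancy is that $i$'s sum counts $j$ while $j$'s sum counts $i$, contributing the equal quantities $1-d(y,x_j)$ and $1-d(y,x_i)$. Writing $\phi$ for this common function and $y_t = f(P_t)$, the per-step inequalities above all refer to the same $\phi$ as long as the deviators belong to a single group, so they chain into $\phi(y_0)\le \phi(y_1)\le\cdots\le \phi(y_s)$. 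Hence $\phi(f(\mathbf{r})) = \phi(y_0) \le \phi(y_s) = \phi(f(\tilde{\mathbf{r}}))$, which says precisely that no member of such a block strictly lowers its cost under the joint misreport.

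The main obstacle is that this chaining closes only when the per-step inequalities speak about the same cost function, and the identity above holds only for agents sharing both location and group: two same-location agents in different groups have genuinely different functions $c_i(\cdot,\mathbf{r})$, so an other-group deviation step creates a ``gap'' in the chain for a fixed target agent $i$. I would therefore argue group by group, observing that to bound the cost of a fixed $i\in S$ it suffices to control the inequalities indexed by deviators in $i$'s own group (which share $i$'s $\phi$) while showing the deviations outside $i$'s group cannot be exploited, by appealing to single-agent strategyproofness of $i$ against the composite move. Making this decomposition airtight---isolating the effect on $\phi_i$ of out-of-group deviations while the mechanism relocates the facility---is the delicate point, and is exactly where the cancellation identity together with a careful ordering of the sequence must be combined; the bookkeeping parallels that of \Cref{lem:partial group SP}.
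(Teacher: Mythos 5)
Your telescoping skeleton is the same as the paper's, and the single-group observation you make (two agents sharing both location and group have literally identical cost functions, by the cancellation you describe) is correct. But the lemma is stated for an arbitrary set of co-located agents, possibly spanning several groups, and there your argument does not close: as you yourself note, the step-$t$ inequality delivered by strategyproofness concerns $\phi_{a_t}$, so for a fixed target agent $i$ the steps taken by deviators outside $i$'s group leave gaps in the chain. The repair you sketch---invoking ``single-agent strategyproofness of $i$ against the composite move''---is not available: strategyproofness only constrains unilateral deviations by $i$ itself and says nothing about how the facility moves when \emph{other} agents change their reports. So the multi-group case is a genuine gap, not mere bookkeeping.

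The paper closes the chain by changing what gets chained. From each per-step inequality $c_{a_t}(f(P_{t-1}),\cdot)\le c_{a_t}(f(P_t),\cdot)$ it extracts the statement $d(f(P_{t-1}),x)\le d(f(P_t),x)$, i.e., reverting a deviator to the truth cannot pull the facility strictly closer to the common location $x$. The quantity $d(f(\cdot),x)$ is the same for every member of $S$ regardless of group, so these inequalities concatenate across groups without any decomposition, yielding $d(f(\mathbf{r}),x)\le d(f(\tilde{\mathbf{r}}),x)$ and hence the claim. (Both directions of this cost-versus-distance translation implicitly use that an agent at $x$ weakly prefers facilities closer to $x$; this is immediate when the deviators' costs reduce to distance, e.g.\ $\alpha=0$, which is the situation in which the lemma is invoked in \Cref{thm:pp2}.) If you want to keep your cost-based chaining, you need either this distance reduction or some other group-independent potential; as written, your proof establishes only the single-group case of the lemma.
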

\begin{theorem}~\label{thm:pp2}
    Any deterministic mechanism has an approximation ratio of at least $2$ for minimizing the social cost when both locations and group memberships are private.
\end{theorem}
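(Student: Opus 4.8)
The plan is to argue by contradiction: suppose $f$ is deterministic, strategyproof against both location and group misreports, and has approximation ratio $\rho<2$. I would exploit the reformulation used in \Cref{thm:med-social} and \Cref{pro:opt}, namely $\soc(y,\mathbf{r})=\sum_{i} w_i\, d(x_i,y)+\sum_i(1-w_i)$ with $w_i=1-\alpha_{g_i}(|G_{g_i}|-1)\in[0,1]$, so that every instance is a weighted $1$-median problem with an additive constant $C=\sum_i(1-w_i)$ that does not depend on $y$. The crucial feature of the both-private setting is that a co-located cluster can, by reporting itself as a single group at the Consistency-of-Aversion boundary $\alpha=1/(|G|-1)$, drive each member's weight down to $0$, and by reporting itself as singletons drive each weight up to $1$; hence the very mass that determines the optimum is manipulable.

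The core step places one cluster at $0$ and one at $1$, and considers the common report $Q$ in which every agent declares to be a singleton (all weights $1$); on $Q$ the social cost is constant, so $f(Q)=q$ is some point, say $q\le 1/2$ after relabeling. From a true profile $\mathbf{r}$ whose right cluster genuinely has small positive weight $w$ (optimum at $0$), that cluster can reach $Q$ by a pure group misreport; since its members sit at $1$ with positive weight, their true cost $w(1-y)+C_i$ strictly decreases in $y$, so \Cref{lem:partial group SP} yields $q\le f(\mathbf{r})$. Symmetrically, from the mirror profile $\mathbf{r}'$ (optimum at $1$, left cluster of weight $w$), \Cref{lem:partial group SP} gives $f(\mathbf{r}')\le q$. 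Thus the outputs on the two opposite-optimum instances are separated by the single pivot $q$, which forces a ratio of at least $3/2-O(w)$ on whichever instance lies on the far side of $q$.

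A single pivot yields only $3/2$, so — as announced in the introduction — I would chain a \emph{series} of such profiles. Using \Cref{lem:partial group SP location}, co-located agents may additionally be relocated without benefit, which lets me translate the clusters to a ladder of intermediate positions $0=p_0<p_1<\dots<p_T=1$ and repeat the group-defection argument at each rung; the location-misreport constraints glue consecutive rungs together so that $f$ is forced to lag strictly behind the marching optimum at every step. Accumulating the per-step inequalities across the ladder, while simultaneously sending the interior weights to the extremes $\{0,1\}$ so that the additive constant $C$ stops diluting the ratio, drives $f$'s output to the wrong endpoint on some profile and the ratio up to $2$, contradicting $\rho<2$. Interior outputs need not be treated as a separate case: the same inequalities bound $f(\cdot)$ wherever it lands, and \Cref{lem:partial group SP location} rules out an interior point being pulled toward any cluster, which is exactly what upgrades \Cref{thm:lbp} from agent-location mechanisms to all mechanisms.

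The main obstacle is that the extremal ratio-$2$ instance has a weight-$0$ (indifferent) cluster, which provides no strategyproofness leverage at all; consequently the bound of $2$ cannot be attained on any single profile but only approached along the series as $w\to 0$, and the argument must keep the deviating cluster strictly non-indifferent while controlling both the accumulated slack across the ladder and the additive constant $C=\sum_i(1-w_i)$, which pushes every individual ratio back toward $1$. Checking that each intermediate profile satisfies Consistency of Aversion (so \Cref{pro:aversion} keeps the costs monotone and both lemmas applicable) and bookkeeping the ladder so that the per-step slacks sum up over the full unit interval are the technical heart of the proof.
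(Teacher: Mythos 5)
Your proposal follows essentially the same route as the paper's proof: both hinge on \Cref{lem:partial group SP} and \Cref{lem:partial group SP location} applied to co-located clusters, both construct a series of profiles that drags two clusters toward the endpoints of $[0,1]$ while the simultaneous-location-misreport constraints keep the facility pinned near one of them, and both close with a group switch toward the Consistency-of-Aversion boundary $\alpha = 1/(|G|-1)$ so that the ratio is only approached as the deviating cluster's weight vanishes. The differences are minor --- you enter through the all-singleton pivot profile and interleave the group defection along the ladder, whereas the paper starts from a symmetric two-cluster profile near $1/2$ and performs the single group switch only at the final rung --- and your ladder bookkeeping is left at the same level of sketch as the paper's own ``series of similar analysis.''
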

Next, we consider minimizing the maximum cost.  Unlike most previous work on facility location mechanism design, where analyzing the maximum cost faces less challenge than analyzing the social cost, we face the greatest challenge with this objective.

\section{Maximum Cost}~\label{sec:maximum}
Similar to the social cost, we first consider the performance of Left-M, the best mechanism for the maximum cost in the classical facility location mechanism design~\citep{procaccia2009approximate}. After showing that, we propose a mechanism that has a smaller approximation ratio.

\begin{lemma}~\label{lem:single group}
    Locating the facility at any point between the leftmost agent location and the rightmost agent location can achieve an approximation ratio of $2$, when there is only one group.
\end{lemma}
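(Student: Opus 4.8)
The plan is to exploit the single-group structure to rewrite the cost in a form where the maximum is governed entirely by the two extreme agents. Writing $\alpha=\alpha_1$ and using $|G_1|=n$, for any $y$ the cost of agent $i$ becomes
\[
c_i(y,\mathbf r)=(1+\alpha)\,d(y,x_i)-\alpha D(y)+\alpha(n-1),
\]
where $D(y)=\sum_{k}d(y,x_k)$, since $\sum_{k\ne i}d(y,x_k)=D(y)-d(y,x_i)$. The terms $-\alpha D(y)+\alpha(n-1)$ are common to all agents, so the maximum cost is attained by whichever agent is farthest from $y$, and that agent is always the leftmost or the rightmost one. Hence, letting $x_L,x_R$ be the leftmost/rightmost locations, $\ell=x_R-x_L$, and $M(y)=\max_i d(y,x_i)=\max(y-x_L,\,x_R-y)$, I obtain the clean expression
\[
\mc(y,\mathbf r)=(1+\alpha)M(y)-\alpha D(y)+\alpha(n-1).
\]

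Next I would bound the mechanism's cost and the optimum separately. For the upper bound, fix any $y\in[x_L,x_R]$: then $M(y)\le\ell$, while $D(y)\ge d(y,x_L)+d(y,x_R)=\ell$, so
\[
\mc(y,\mathbf r)\le(1+\alpha)\ell-\alpha\ell+\alpha(n-1)=\ell+\alpha(n-1).
\]
For the lower bound on the optimum I would instead use $D(y)\le nM(y)$ (each of the $n$ distances is at most $M(y)$) together with $M(y)\ge\ell/2$, valid for \emph{every} $y$ by the triangle inequality $2M(y)\ge d(y,x_L)+d(y,x_R)\ge\ell$. This yields, for every $y$,
\[
\mc(y,\mathbf r)\ge(1+\alpha-\alpha n)M(y)+\alpha(n-1)\ge\bigl(1-\alpha(n-1)\bigr)\tfrac{\ell}{2}+\alpha(n-1),
\]
so the optimal cost is at least $\tfrac{\ell}{2}+\alpha(n-1)\bigl(1-\tfrac{\ell}{2}\bigr)$.

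Finally I would combine the two bounds. Writing $\beta=\alpha(n-1)$, the ratio is at most $\frac{\ell+\beta}{\tfrac{\ell}{2}+\beta(1-\ell/2)}$, and this is $\le 2$ precisely because $\ell+\beta\le\ell+2\beta-\beta\ell$ reduces to $\beta(1-\ell)\ge 0$, which holds since $\ell\le 1$ on the normalized interval $I=[0,1]$. I expect the main obstacle to be the non-convexity of $\mc(y,\mathbf r)$: it is the difference of the convex term $(1+\alpha)M(y)$ and the concave term $-\alpha D(y)$, so neither the median nor the midpoint argument from the classical setting transfers directly. The reformulation above, and the deliberate choice to bound $M$ and $D$ against each other in two opposite ways (via $D\ge\ell$ for the upper bound and $D\le nM$ for the lower bound), are what sidestep this. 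Crucially, both the sign condition $1+\alpha-\alpha n\ge 0$ in the lower bound and the final inequality rely on \emph{Consistency of Aversion} ($\alpha(n-1)\le 1$) and on $\ell\le 1$; dropping either breaks the corresponding step.
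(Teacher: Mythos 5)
Your proof is correct, and it reaches the same two endpoints as the paper (an upper bound of $\ell+\alpha(n-1)$ on the mechanism's cost and a lower bound of $\ell/2+\alpha(n-1)(1-\ell/2)$ on the optimum, both tight on the paper's own worst-case instance), but it gets there by a cleaner route. The paper first proves by pairwise comparison that the maximum cost is incurred by an extreme agent, then explicitly solves $c_{lm}(y^*)=c_{rm}(y^*)$ to locate the optimum at the midpoint, and finally tracks how the leftmost agent's cost grows as the facility slides from $y^*$ to $x_{rm}$. Your identity $c_i(y)=(1+\alpha)d(y,x_i)-\alpha D(y)+\alpha(n-1)$ makes the extreme-agent fact immediate (the last two terms are common to all agents), and more importantly it lets you lower-bound $\mc(y)$ uniformly over \emph{all} $y$ via $D(y)\le nM(y)$ and $M(y)\ge\ell/2$, so you never need to identify the optimal location at all; this sidesteps a small imprecision in the paper, which writes $c_{lm}(y^*)=\ell/2+\alpha(n-1)(1-\ell/2)$ as an equality even though it is only an inequality ($\ge$, the direction actually needed) when interior agents are not all at the endpoints. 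The one thing your write-up should state explicitly is that both uses of \emph{Consistency of Aversion} are sign conditions on the same quantity $\beta=\alpha(n-1)\le 1$ — you do flag this at the end, and the degenerate cases $n=1$ and $\ell=0$ are trivially ratio $1$. No gap.
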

\begin{theorem}~\label{thm:5}
    Left-M is strategyproof when both locations and group memberships are private, and has an approximation ratio of 3.
\end{theorem}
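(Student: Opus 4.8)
The plan is to treat strategyproofness and the approximation ratio separately, in both cases exploiting the structural consequence of \emph{Consistency of Aversion} that each agent's cost is single-dipped with its unique minimum at the agent's own location.

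For strategyproofness, the first observation is that Left-M's output depends only on the reported \emph{locations}; since true costs are always evaluated on the true profile, a false group report changes neither the facility nor the agent's realized cost, so a combined misreport can do no better than its location component alone. It therefore suffices to rule out profitable location deviations. Fix agent $i$ and write $L=\min_{k\neq i}x'_k$ for the leftmost location reported by the others. Reporting $\hat{x}_i$ induces the facility $\min(\hat{x}_i,L)$, so the set of facility locations agent $i$ can reach is exactly $[0,L]$, and truthful reporting yields $\min(x_i,L)$. I would then establish the single-dip property: differentiating gives $c_i'(y)=\operatorname{sign}(y-x_i)-\alpha_{g_i}\sum_{k\in G_{g_i}\setminus\{i\}}\operatorname{sign}(y-x_k)$, and because $\alpha_{g_i}(|G_{g_i}|-1)\le 1$ by \Cref{pro:aversion}, the cost is non-increasing for $y<x_i$ and non-decreasing for $y>x_i$, hence minimized at $x_i$. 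On the reachable interval $[0,L]$ this minimum is attained at $\min(x_i,L)$ --- an interior minimum when $x_i\le L$, and the right endpoint of a non-increasing branch when $x_i>L$ --- which is exactly the truthful outcome. Thus no deviation helps and Left-M is strategyproof even under unrestricted misreporting.

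For the approximation ratio, let $y^*$ be an optimal location, $OPT=\mc(y^*,\mathbf{r})$, and assume $x_1\le\dots\le x_n$ so that Left-M outputs $x_1$. I would bound $c_i(x_1,\mathbf{r})$ for an arbitrary agent $i$ via two inequalities. First, decomposing $c_i(x_1)-c_i(y^*)$ into the own-distance change $d(x_1,x_i)-d(y^*,x_i)$ and the externality change $\alpha_{g_i}\sum_{k}\bigl(d(y^*,x_k)-d(x_1,x_k)\bigr)$, the (reverse) triangle inequality bounds each distance difference by $d(x_1,y^*)$, giving $c_i(x_1)-c_i(y^*)\le\bigl(1+\alpha_{g_i}(|G_{g_i}|-1)\bigr)d(x_1,y^*)\le 2\,d(x_1,y^*)$, where the last step again uses \Cref{pro:aversion}. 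Second, I would bound the displacement by $OPT$: the leftmost agent's cost at $y^*$ satisfies $c_1(y^*)\ge d(y^*,x_1)$ since its externality term is non-negative, and $c_1(y^*)\le OPT$, so $d(x_1,y^*)\le OPT$. Combining the two, $c_i(x_1)\le c_i(y^*)+2\,OPT\le 3\,OPT$ for every $i$, whence $\mc(x_1,\mathbf{r})\le 3\,OPT$.

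The delicate part is the strategyproofness reduction rather than the ratio: one must argue carefully that (i) the group report is genuinely irrelevant to the induced cost and (ii) the truthful facility minimizes the \emph{true} cost over the entire reachable set $[0,L]$, a fact that is no longer obvious once the externality term is present and that rests squarely on the $\alpha_{g_i}(|G_{g_i}|-1)\le 1$ regime. I also note that the factor $3$ arises by charging $OPT$ once for the displacement $d(x_1,y^*)$ and a further $2\,OPT$ for the combined own-distance and externality slack; since these two sources of loss cannot be saturated simultaneously (a large externality slack forces competitors near $x_1$, which in turn inflates $OPT$), I would not expect $3$ to be tight for Left-M, consistent with the reported lower bound of $2$ for this setting.
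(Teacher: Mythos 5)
Your proposal is correct and follows essentially the same route as the paper: the ratio comes from the identical two-step bound $d(x_1,y^*)\le c_1(y^*,\mathbf{r})\le \mc(y^*,\mathbf{r})$ together with $c_i(x_1,\mathbf{r})-c_i(y^*,\mathbf{r})\le\bigl(1+\alpha_{g_i}(|G_{g_i}|-1)\bigr)d(x_1,y^*)\le 2\,d(x_1,y^*)$. If anything, your strategyproofness argument is more careful than the paper's one-sentence claim, since you make explicit that moving the facility farther from every agent is harmless only because $\alpha_{g_i}(|G_{g_i}|-1)\le 1$ renders each cost single-dipped at $x_i$, and you verify optimality of the truthful report over the entire reachable set $[0,L]$.
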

\subsection{Private Locations and Public Group Memberships}

Different from the social cost, we cannot restructure the contribution of each agent since the maximum cost is incurred by a single agent. From Lemma~\ref{lem:single group} we know that the approximation ratio within a group is very small. 
Hence, the following direction of strategyproof mechanism design is to find a location such that the analysis of the approximation ratio can be degenerated to the single group in most of cases.

\begin{mechanism}[Last-leftmost or First-rightmost Mechanism (LoF-M)]
    Given any profile set $\mathbf{r} = (r_1, \ldots, r_n)$, without loss of generality, output the facility location $y$ at the leftmost location among the rightmost group leftmost agent location or the leftmost group rightmost agent location, i.e.,
    \begin{equation*}
        y = \min\{x_{lm}^*, x_{rm}^*\}
    \end{equation*}
    where $x_{lm}^* = \max_{j\in [m]}\{x^j_{lm}\}$, $x_{rm}^* = \min_{j\in [m]}\{x^j_{rm}\}$, $x^j_{lm}$ is the leftmost agent location in group $j$ and  $x^j_{rm}$ is the rightmost agent location in group $j$.
\end{mechanism}

\begin{theorem} \label{thm:lof}
    LoF-M is strategyproof and has an approximation ratio of $17/8$ when locations are private and group memberships are public.
\end{theorem}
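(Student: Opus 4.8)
The plan is to establish the two assertions in turn. For strategyproofness under location-only misreports, observe that agent $i$ can move $y=\min\{x^*_{lm},x^*_{rm}\}$ only through its own group's two extremes $x^{g_i}_{lm},x^{g_i}_{rm}$, and only when $g_i$ attains the outer extremum (the $\max$ defining $x^*_{lm}$ or the $\min$ defining $x^*_{rm}$). I would organize the argument by which term defines $y$ and by whether agent $i$ is interior to its group's span or is its leftmost/rightmost member. If $i$ is interior, any report keeping it interior fixes both group extremes and hence $y$, while a report leaving the span only pushes an extreme away from $x_i$. If $i$ is extremal, I would show that every report which moves $y$ moves it strictly farther from $x_i$ and toward $i$'s competitors, so by \emph{Consistency of Aversion}---equivalently, since $w_i=1-\alpha_{g_i}(|G_{g_i}|-1)\ge 0$---the cost $c_i$ cannot decrease. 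The tedious part is checking the sign of the induced shift of $y$ in each subcase; the conceptual content is merely that $y$ is monotone in each group's extremes.

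For the approximation ratio, write $\mathrm{OPT}=\mc(y^*,\mathbf{r})$ and exploit \Cref{lem:single group}: whenever the facility lies in a group's span $[x^j_{lm},x^j_{rm}]$, the within-group maximum cost is at most twice that group's own optimum and is attained at an extremal agent. Since $c_i(y)$ depends only on agent $i$'s own group, the within-group maximum cost is exactly the single-group quantity of \Cref{lem:single group}. I would split on the sign of $x^*_{lm}-x^*_{rm}$. When $x^*_{lm}\le x^*_{rm}$, the output $y=x^*_{lm}$ satisfies $x^j_{lm}\le y\le x^j_{rm}$ for every $j$, so $y$ lies in every span; since $\mathrm{OPT}$ is at least each group's own optimum, \Cref{lem:single group} gives $\mc(y,\mathbf{r})\le 2\,\mathrm{OPT}\le (17/8)\,\mathrm{OPT}$ at once.

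The binding case is $x^*_{lm}>x^*_{rm}$, where $y=x^*_{rm}=x^{j_0}_{rm}$ is the rightmost agent of some group $j_0$ and at least one group lies entirely to the right of $y$. Since $x^*_{rm}=\min_j x^j_{rm}$, no group lies entirely to the left of $y$, so the maximum cost at $y$ comes either from a group whose span contains $y$ (bounded by $2\,\mathrm{OPT}$ as above) or from a group entirely right of $y$. For such a group a one-line computation shows $c_i(y)$ is increasing in $x_i$, so its maximizer is the group's rightmost agent. I would then lower-bound $\mathrm{OPT}$ from the tension between two agents pulling in opposite directions---the rightmost agent of $j_0$ pinned at $y$, which penalizes moving the facility right, and the critical right-group's rightmost agent, which penalizes moving it left---and write the ratio as a function of the gap $y^*-y$, the externality coefficients and the group sizes, all constrained by $\alpha_j(|G_j|-1)\le 1$; maximizing yields $17/8$.

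The main obstacle is precisely this optimization. In \Cref{thm:5} the crude bounds $\mathrm{OPT}\ge y^*-y$ and $\alpha(|G|-1)\le 1$ only give $3$; here the cleverer choice $y=x^*_{rm}$ forces the critical right-group to straddle the interval between $y$ and $y^*$, so its externality term simultaneously lower-bounds $\mathrm{OPT}$ and caps how fast the maximum cost can grow to the right of $y$. Balancing these two coupled effects, rather than bounding them separately, is what drives $3$ down to $17/8$; pinning down the extremal configuration---how many agents sit at each endpoint and the exact value of $\alpha$---is the delicate step. I would close by exhibiting a small two-group instance on which LoF-M attains $17/8$, confirming the analysis is tight and consistent with the problem's weaker lower bound of $2$.
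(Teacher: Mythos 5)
Your overall skeleton matches the paper's: strategyproofness by casing on whether the agent's group attains the defining extremum, and the ratio via \Cref{lem:single group} whenever the output lies in the critical group's span, with the hard case being a group lying entirely to the right of $y=x^*_{rm}$ with $y^*>y$. Your observation that when $x^*_{lm}\le x^*_{rm}$ the output lies in \emph{every} group's span (so $\mc(y,\mathbf{r})\le 2\,\mathrm{OPT}$ immediately, since each within-group maximum is bounded by twice that group's own optimum, which is at most $\mathrm{OPT}$) is a clean compression of the paper's Cases 1--2, and your reduction of the remaining maximum to a right-group's rightmost agent is the same monotonicity argument the paper uses.

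The genuine gap is that the step producing the constant $17/8$ is only named, not performed, and it is the entire content of the theorem beyond the generic bounds. With what you have written, the hard case yields only $\mathrm{OPT}\ge y^*-y$ and a cost increment of at most $(1+c_{jr})(y^*-y)$, i.e.\ the ratio $3$ already obtained for Left-M in \Cref{thm:5}. To get $17/8$ the paper needs a sequence of non-obvious worst-case reductions you do not supply: collapse the group whose rightmost agent defines $y$ onto the single point $y$ and argue its externality coefficient can be taken to be $0$; split on whether $y^*$ lies left or right of $x^{jr}_{lm}$ (the left subcase still gives $2$); in the right subcase, slide $x^{jr}_{lm}$ leftward until the three critical agents have equal cost at the midpoint $(x^{jr}_{lm}+x^{jr}_{rm})/2$, justifying that the ratio does not decrease; then extract the \emph{two} lower bounds $\mathrm{OPT}\ge d_1+d_2$ and $\mathrm{OPT}\ge d_2+c_{jr}(1-d_2)$ together with the increment bound $d_1+d_2+c_{jr}(d_2-d_1)$ (whose two pieces use opposite signs of the externality on the two legs $y^*\to x^{jr}_{lm}$ and $x^{jr}_{lm}\to y$); and finally balance at $d_1=c_{jr}(1-d_2)$ under $d_1+2d_2\le 1$, maximizing at $c_{jr}=1/4$. "Maximizing yields $17/8$" asserts the conclusion of this optimization without setting up the objective or the constraints, so as written the proposal does not establish the stated ratio.
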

\begin{proof}[proof sketch]
    For the strategyproofness, we can discuss by two cases: agent in group $j^*$, and agent not in group $j^*$. Next, we will show the approximation ratio by case. Let $y^*$ be the optimal facility location, we will start from easier cases to the most difficult ones.

    \paragraph{Case 1: $y=x^*_{lm}$ and $y>y^*$.} \ \\
    \begin{center}
    \begin{tikzpicture}[scale = 2]
        \tikzstyle{every node}=[font=\footnotesize]
            \filldraw[black] (1/2,0) circle (0.5pt);
            \filldraw[black] (3/2,0) circle (0.5pt);
            \node at (0,-1/10) {$0$};
            \node at (1/2,-1/10) {$y^*$};
            \node at (3/2,-1/10) {$y(x^*_{lm})$};
            \node at (2,-1/10) {$1$};           
            \draw [-](0,0)--(2,0);
    \end{tikzpicture}
    \end{center}
    In this case, we can verify that the maximum cost achieved by $y$ is incurred by a group leftmost agent. Suppose that the maximum cost achieved by $y$ is incurred by $x^{j'}_{lm}$. Then we can show that $y$ is within the interval $[x^{j'}_{lm}, x^{j'}_{rm}]$, and the maximum cost achieved by $y^*$ is at least the maximum cost among $x^{j'}_{lm}$ and $x^{j'}_{rm}$. From Lemma~\ref{lem:single group} we have $2$-approximation.

    \paragraph{Case 2: $y=x^*_{lm}$ and $y<y^*$.} \ \\
    \begin{center}
    \begin{tikzpicture}[scale = 2]
        \tikzstyle{every node}=[font=\footnotesize]
            \filldraw[black] (1/2,0) circle (0.5pt);
            \filldraw[black] (3/2,0) circle (0.5pt);
            \node at (0,-1/10) {$0$};
            \node at (3/2,-1/10) {$y^*$};
            \node at (1/2,-1/10) {$y(x^*_{lm})$};
            \node at (2,-1/10) {$1$};           
            \draw [-](0,0)--(2,0);
    \end{tikzpicture}
    \end{center}
    In this case, we can verify that the maximum cost achieved by $y$ is incurred by a group rightmost agent. Then we can use an analysis similar to Case 1 to show $2$-approximation.

    \paragraph{Case 3: $y=x^*_{rm}$ and $y>y^*$.}\ \\
        \begin{center}
    \begin{tikzpicture}[scale = 2]
        \tikzstyle{every node}=[font=\footnotesize]
            \filldraw[black] (1/2,0) circle (0.5pt);
            \filldraw[black] (1,0) circle (0.5pt);
            \node at (0,-1/10) {$0$};
            \node at (1/2,-1/10) {$y^*$};
            \node at (1,-1/10) {$y(x^*_{rm})$};
            \node at (2,-1/10) {$1$};           
            \draw [-](0,0)--(2,0);
    \end{tikzpicture}
    \end{center}
    In this case, we can verfiy that the maximum cost achieved by $y$ is incurred by a group leftmost agent. Then we can use an analysis similar to Case 1 to show $2$-approximation.

    \paragraph{Case 4: $y=x^*_{rm}$ and $y < y^*$.} 
    In this case, we can verify that the maximum cost achieved by $y$ is incurred by a group rightmost agent. Let this group be $G_{jr}$, the leftmost agent in $G_{jr}$ be $x_{lm}^{jr}$, and the rightmost agent in $G_{jr}$ be $x_{rm}^{jr}$ (for simplify the description, we will use $x_i$ to denote both agent $i$ and their location). If $y$ is at or on the right of $x_{lm}^{jr}$, from Lemma \ref{lem:single group} we have the approximation ratio of $2$. Now, we consider the case where $y$ is on the left of $x_{lm}^{jr}$.
    
    Let the group which the agent at $x^*_{rm}$ belongs to be group $G_{jl}$, the leftmost agent in $G_{jl}$ be $x_{lm}^{jl}$, and the rightmost agent in $G_{jl}$ be $x_{rm}^{jl}$. We can show that moving all agents in $G_{jl}$ to $y$ iteratively will not decrease the approximation ratio. After the movement, we will examine which agent incurs the maximum cost achieved by $y^*$. There are two possible cases, and we will analyze that the ratio in each of these cases does not exceed $17/8$.

    \textbf{Case 4-1.} $y^*$ is on the left of $x_{lm}^{jr}$. \\
    \begin{center}
    \begin{tikzpicture}[scale = 2]
        \tikzstyle{every node}=[font=\footnotesize]

            \filldraw ([xshift=-2pt,yshift=-0.75pt]0.6,0) rectangle ++(1.5pt,1.5pt);
            \filldraw[black] (2,0) circle (0.75pt);
            \filldraw[black] (2.5,0) circle (0.75pt);
            \node at (0,-0.15) {$0$};
            \node at (0.6,-0.15) {$(y, x^{jl}_{lm}, x^{jl}_{rm})$};
            \node at (1.5,-0.15) {$y^*$};
            \node at (2,-0.15) {$x^{jr}_{lm}$};
            \node at (2.5,-0.15) {$x^{jr}_{rm}$};
            \node at (3.5,-0.15) {$1$};           
            \draw [-](0,0)--(3.5,0);
    \end{tikzpicture}
    \end{center}
    In this case, we focus on the impact of agent $x_{lm}^{jl}$ and agent $x_{rm}^{jr}$ on the maximum cost achieved by $y^*$. We will increase the approximation ratio by minimizing the maximum cost achieved by $y$. Since all agents in $G_{jl}$ are at the same location, the cost of agent $x_{lm}^{jl}$ is
        \begin{align*}
             c_{lm}^{jl}(y^*, \mathbf{r}) \!=\! d(y^*,x_{rm}^{jr})\!+\!\alpha_{jl}(|G_{jl}|\!-\!1)(1\!-\!d(y^*,x_{rm}^{jr}))\\
             = (1-\alpha_{jl}(|G_{jl}|-1))d(y^*,x_{rm}^{jr})+\alpha_{jl}(|G_{jl}|-1).
        \end{align*}

        Let $\alpha_{jl}(|G_{jl}|-1)$ be $c_{jl}\in [0,1]$. We further have
        \begin{align*}
            (1-c_{jl})d(y^*,x_{rm}^{jr})+c_{jl} \!-\! ((1-c'_{jl})d(y^*,x_{rm}^{jr})+c'_{jl}) \\
             = (1-d(y^*,x_{rm}^{jr}))(c_{jl}-c'_{jl}) \ge 0,
        \end{align*}
        where $c_{jl}>c'_{jl}$, implying that decreasing the value of $c_{jl}$ will not increase the maximum cost achieved by $y^*$ and not change the maximum cost achieved by $y$. Hence, the approximation ratio can only be larger by setting $c_{jl}=0$. Then we have $c_{lm}^{jl}(y^*, \mathbf{r}) = d(y^*,x_{rm}^{jr})$.
        Moreover, the cost of agent $x_{rm}^{jr}$ will increase the amount of at most $(y^*-y)-\alpha_{jr}(|G_{jr}|-1)(y^*-y)$ from $y^*$ to $y$ since all agents in $G_{jr}$ are on the right of $y^*$.Therefore, the approximation ratio is at most
        \begin{align*}
            \rho &\le \frac{c_{lm}^{jl}(y^*,\mathbf{r}) + (y^*-y)-\alpha_{jr}(|G_{jr}|-1)(y^*-y)}{c_{lm}^{jl}(y^*,\mathbf{r})}\\
            & \le \frac{2-\alpha_{jr}(|G_{jr}|-1)}{1} \le 2.
        \end{align*}
        
        \textbf{Case 4-2.} $y^*$ is on the right of $x_{lm}^{jr}$. \\
        \begin{center}
            \begin{tikzpicture}[scale = 2]
        \tikzstyle{every node}=[font=\footnotesize]

            \filldraw ([xshift=-2pt,yshift=-0.75pt]0.6,0) rectangle ++(1.5pt,1.5pt);
            \filldraw[black] (2,0) circle (0.75pt);
            \filldraw[black] (3,0) circle (0.75pt);
            \node at (0,-0.15) {$0$};
            \node at (0.6,-0.15) {$(y, x^{jl}_{lm}, x^{jl}_{rm})$};
            \node at (2,-0.15) {$x^{jr}_{lm}$};
            \node at (3,-0.15) {$x^{jr}_{rm}$};
            \node at (2.5,-0.15) {$y^*$};
            \node at (3.5,-0.15) {$1$};           
            \draw [-](0,0)--(3.5,0);

            \draw [decorate,decoration={brace,amplitude=5pt,mirror,raise=4ex}](0.6,0) -- (2,0) node[midway,yshift=-3em]{$d_1$};
            \draw [decorate,decoration={brace,amplitude=5pt,mirror,raise=4ex}](2,0) -- (2.5,0) node[midway,yshift=-3em]{$d_2$};
    \end{tikzpicture}
    \end{center}
        We consider the impact of agents $x_{lm}^{jl}$, $x_{lm}^{jr}$, and $x_{rm}^{jr}$ on the maximum cost achieved by $y^*$. First, we can use a similar analysis to show that setting $c_{jl}=0$ makes the approximation ratio larger. From Lemma~\ref{lem:single group} we know that the agents $x_{lm}^{jr}$ and $x_{rm}^{jr}$ will have the same cost with respect to the location $(x_{lm}^{jr} + x_{rm}^{jr})/2$. 
        
        If the cost of agent $x_{lm}^{jl}$ is larger than the other two agents with respect to the location  $(x_{lm}^{jr} + x_{rm}^{jr})/2$.. We can move $x_{lm}^{jr}$ to the left until it reaches $y$, or the cost of agent $x_{lm}^{jl}$ is equal to the other two agents with respect to the location  $(x_{lm}^{jr} + x_{rm}^{jr})/2$.. After the movement, the approximation ratio will not decrease. If the movement is terminated by reaching $y$, we can use Lemma~\ref{lem:single group} to show an approximation ratio $2$. 
        
        Next, we consider the case where the movement is terminated by the other condition. In this case, all three agents $x_{lm}^{jl}$, $x_{lm}^{jr}$, $x_{rm}^{jr}$ have the same cost with respect to the location $(x_{lm}^{jr} + x_{rm}^{jr})/2$. Then we focus on the approximation ratio between the maximum cost achieved by $y$ and $(x_{lm}^{jr} + x_{rm}^{jr})/2$ since the approximation ratio between the maximum cost achieved by $y$ and $y^*$ is less or equal to the former one. To simplify the description, let $d_1=x_{lm}^{jr}-x_{lm}^{jl}$ and $d_2 = (x_{lm}^{jr} + x_{rm}^{jr})/2$, and we just use $y^*$ for $(x_{lm}^{jr} + x_{rm}^{jr})/2$. First we have the maximum cost achieved by $y^*$ is at least
        \begin{equation*} 
            \mc(y^*,\mathbf{r}) \ge c_{lm}^{jl}(y^*,\mathbf{r}) = d_1+d_2.
        \end{equation*}
        Let $\alpha_{jr}(|G_{jr}|-1)$ be $c_{jr}$, we also have the maximum cost achieved by $y^*$ is at least
        \begin{equation*}
            \mc(y^*,\mathbf{r}) \ge c_{rm}^{jr}(y^*,\mathbf{r}) \ge d_2 + c_{jr}(1-d_2).
        \end{equation*}

        From $y^*$ to $x_{lm}^{jr}$, the cost of agent $x_{rm}^{jr}$ will increase by the amount of at most $(d_2+c_{jr}d_2)$, since there are at most $|G_{jr}|-1$ agents at $x_{lm}^{jr}$.  From $x_{lm}^{jr}$ to $y$, the cost of agent $x_{rm}^{jr}$ will increase by the amount of at most $(d_1-c_{jr}d_1)$, since all agents in $G_{jr}$ are at or on the right of $x_{lm}^{jr}$. Hence, the difference between the maximum cost achieved by $y$ and $y^*$ is at most
        \begin{equation*}
            \mc(y,\mathbf{r}) \le \mc(y^*,\mathbf{r}) + d_1 + d_2 + c_{jr}(d_2-d_1).
        \end{equation*}

        Then, we have the approximation ratio
        \begin{equation} \label{eq: rho*1}
            \rho = \frac{\mc(y,\mathbf{r})}{\mc(y^*,\mathbf{r})} \le 1+\frac{d_1 + d_2 + c_{jr}(d_2-d_1)}{d_1+d_2},
        \end{equation}
        and 
        \begin{equation} \label{eq: rho*2}
            \rho = \frac{\mc(y,\mathbf{r})}{\mc(y^*,\mathbf{r})} \le 1+\frac{d_1 + d_2 + c_{jr}(d_2-d_1)}{d_2 + c_{jr}(1-d_2)}.
        \end{equation}

        One can verify that Equation \ref{eq: rho*1} is monotonically decreasing with respect to $d_1$ and Equation \ref{eq: rho*2} is monotonically increasing with respect to $d_1$. The equality holds when $d_1=c_{jr}(1-d_2)$. By plugging $d_1$ into both equations, we have the approximation ratio
        \begin{align*}
            \rho & = \le 1+\frac{c_{jr}(1-d_2) + d_2 + c_{jr}(d_2-c_{jr}(1-d_2))}{d_2 + c_{jr}(1-d_2)}\\
            & =  1 + \frac{(1+c^2_{jr})d_2+c_{jr}-c^2_{jr}}{(1-c_{jr})d_2+c_{jr}},
        \end{align*}
        which is monotonically increasing with respect to $d_2$. Note that $d_1+2d_2\le 1$, we have $d_2\le \frac{1-c_{jr}}{2-c_{jr}}$. By plugging $d_2$ into the equation, we have the approximation ratio which is only related to $c_{jr}$. By carefully calculation, the approximation ratio reaches the maximum of $17/8$ when $c_{jr}=1/4$.
\end{proof}
Since \citet{procaccia2009approximate} is a special case of our setting where all $\alpha_j=0$, their lower bound of $2$ still hold in our setting.

\subsection{Public Locations and Private Group Memberships}
In this subsection, we locate the facility at the middle point between the location of the leftmost agent and the location of the rightmost agent.

\begin{mechanism}[Middle-point Mechanism (Mid-M)]
    Given any profile set $\mathbf{r}$, locate the facility at $y = \frac{1}{2}(\min_{i \in N}\{x_i\} + \max_{i \in N}\{x_i\})$.
\end{mechanism}

\begin{theorem}~\label{thm:mpm}
    Mid-M is strategyproof, and has an approximation ratio of $(29+20\sqrt{10})/54$  for minimizing the maximum cost when locations are public and group memberships are private.
\end{theorem}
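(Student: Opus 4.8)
The plan is to dispatch strategyproofness immediately and then reduce the approximation bound to a finite-dimensional optimization.

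\emph{Strategyproofness.} The output $y=\frac{1}{2}(\min_i x_i+\max_i x_i)$ is a function of the reported locations only. In this setting locations are public, hence reported truthfully, while only group memberships are private. Since $y$ does not depend on the reported memberships, no agent can move the facility by misreporting their group, and so their true cost $c_i(y,\mathbf r)$ is unchanged. Hence Mid-M is strategyproof, with no case analysis required.

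\emph{Reduction to extreme agents.} Let $y$ be the Mid-M location and $y^*$ an optimal location. First I would show that for any fixed facility the maximum-cost agent inside each group is a group-extreme agent: for two same-group agents $i,i'$ lying on the same side of the facility, a direct cancellation in the externality sum gives $c_i-c_{i'}=(1+\alpha_j)\,(d(y,x_i)-d(y,x_{i'}))$, so the agent farther from $y$ dominates. Therefore $\mc(y)=\max_j\max\{c^j_{lm}(y),c^j_{rm}(y)\}$, and because both the instance cost and Mid-M are invariant under reflection about $y$, I may assume the Mid-M bottleneck is the leftmost agent $p$ of some group $J$, located at $a\le y$. Note also that for a single group the Mid-M point is exactly the group midpoint and hence optimal, so any ratio above $1$ must come from the interaction of at least two groups; the worst case therefore lives among (essentially) two-group instances.

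\emph{Worst-case optimization.} Writing $c_J=\alpha_J(|G_J|-1)\in[0,1]$ for the aggregate externality of $J$, the value $\mc(y)=c_p(y)$ is controlled by the half-width $\ell=\frac{1}{2}(\max_i x_i-\min_i x_i)$, by $a$, by the positions of the remaining members of $J$, and by $c_J$; the optimum $\mc(y^*)$ is governed by the same parameters together with those of the group containing the rightmost agent. I would lower-bound $\mc(y^*)$ by the costs, under any candidate $y^*$, of $p$ and of the rightmost agent: shifting the facility left to relieve $p$ and the members of $J$ necessarily raises the cost on the right, and the externalities flatten these trade-offs at rates $1-c_J$ and $1-c_{\mathrm{right}}$. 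Holding the critical agents fixed reduces the ratio $\mc(y)/\mc(y^*)$ to a ratio of piecewise-linear expressions in the positions and in the coefficients $c_J,c_{\mathrm{right}}$, subject to all positions lying in $[0,1]$.

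The main obstacle is exactly this last optimization. One must identify which agents are simultaneously tight at $y$ and at $y^*$—a delicate point, because a member of $J$ has a V-shaped cost profile in the facility position and can overtake $p$ once the facility leaves the group's span, so the naive choice of clustering members at $y$ is in fact not adversarial—then solve the resulting balance equations and maximize jointly over positions and externality coefficients. As in the LoF-M analysis, I expect the extremum to occur at an interior configuration defined by quadratic conditions (which is where the $\sqrt{10}$ enters), and the bulk of the work to be the bookkeeping that pins the worst case to a single two-group instance and verifies it is a global maximum rather than a boundary artifact, ultimately giving the claimed ratio $(29+20\sqrt{10})/54$.
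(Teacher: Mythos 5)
Your strategyproofness argument is exactly the paper's (the mechanism ignores the private data, so there is nothing to prove), and your reduction to group-extreme agents via the cancellation $c_i-c_{i'}=(1+\alpha_j)(d(y,x_i)-d(y,x_{i'}))$ is the same device the paper imports from Lemma~\ref{lem:single group}. But from that point on the proposal is a plan rather than a proof, and the part you defer is precisely where all the difficulty of the theorem lives. Concretely, three things are missing. First, the reduction to a canonical worst-case configuration is not carried out: the paper gets there by two monotone agent-movement arguments (cluster all of $G_l$ at $x_{lm}^l$, then slide $x_{lm}^r$ left until either it hits the boundary case or the three critical costs equalize at the group midpoint $(x_{lm}^r+x_{rm}^r)/2$), each time checking that $\mc(y)$ does not decrease and $\mc(y^*)$ does not increase. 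These movements must also preserve the extreme reported locations, since otherwise $y$ itself moves -- this is why the paper clusters $G_l$ at $x_{lm}^l$ rather than at $y$; your own remark that clustering at $y$ is ``not adversarial'' shows you sensed the issue but you never resolve it. Your one-sentence justification that the worst case is a two-group instance (``any ratio above $1$ must come from interaction of at least two groups'') does not rule out that a third group could make things worse; in the paper this follows only because the lower bounds on $\mc(y^*)$ used are the costs of $x_{lm}^l$ and $x_{rm}^r$ alone.

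Second, the defining property of Mid-M enters the paper's proof exactly once, as the inequality $d_1\ge d_3+2d_2$ (the left half-interval is at least as long as the right), where $d_1=y-x_{lm}^l$, $d_3=d(y,x_{lm}^r)$ and $d_2$ is the half-width of $G_r$; this constraint is what pushes the bound below the $17/8$ of LoF-M, and your proposal only gestures at it via ``the half-width $\ell$'' without using it. Third, the constrained optimization itself -- balancing the two upper bounds on $\rho$ (one decreasing, one increasing in $d_3$) at $d_3=(c_r-c_rd_2-2d_2)/2$, pushing $d_2$ to $(1-c_r)/(2-c_r)$, and maximizing the resulting expression over $c_r$ to land at $c_r=(5-\sqrt{10})/3$ -- is stated only as an expectation (``I expect the extremum to occur at an interior configuration''), as is the separate subcase $y>x_{lm}^r$ that the paper bounds by $14/9$. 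Since the theorem's content is the specific constant $(29+20\sqrt{10})/54$, asserting that the bookkeeping will produce it does not constitute a proof; the approach is the right one, but the gap is genuine.
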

\begin{proof}[proof sketch]
    Since Mid-M does not use the group memberships, it satisfies the strategyproofness trivially. Hence, we focus on the approximation ratio. Let the optimal facility location be $y^*$. Without loss of generality, we assume that $y< y^*$, the leftmost agent is in group $G_{l}$, and the maximum cost achieved by $y$ is incurred by an agent in $G_r$. To simplify the description, we use $x_{lm}^j$ and $x_{rm}^j$ to denote the leftmost and the rightmost agent in group $G_j$. First, we can see that the maximum cost achieved by $y$ is incurred by agent $x_{rm}^r$. 
    Next, we focus on the impact of agent $x_{lm}^{l}$, $x_{rm}^{r}$ on the maximum cost achieved by $y^*$.  First, we can move all agents in $G_l$ to $x_{lm}^{l}$, where the approximation ratio will not decrease. 
    
    If the cost of agent $x_{lm}^{l}$ is larger than the other two agents with respect to the location $(x_{lm}^{r} + x_{rm}^{r})/2$. We can use a similar analysis with Theorem~\ref{thm:lof} to move $x_{lm}^{jr}$ to the left until it reaches $x_{lm}^{l}$, or the cost of agent $x_{lm}^{l}$ is equal to the other two agents with respect to the location $(x_{lm}^{r} + x_{rm}^{r})/2$. After the movement, the approximation ratio will not decrease. If the movement is terminated by reaching $y$, we have $y \ge (x_{lm}^{r} + x_{rm}^{r})/2$ while $y^* \le (x_{lm}^{r} + x_{rm}^{r})/2$. Hence, we have $y=y^*$ and the approximation ratio is $1$.

    Next, we consider the case where the movement is terminated by the other condition. In this case, all three agents $x_{lm}^{l}$, $x_{lm}^{r}$, $x_{rm}^{r}$ have the same cost with respect to the location $(x_{lm}^{r} + x_{rm}^{r})/2$. Then we focus on the approximation ratio between the maximum cost achieved by $y$ and $(x_{lm}^{r} + x_{rm}^{r})/2$ since the approximation ratio between the maximum cost achieved by $y$ and $y^*$ is less or equal to the former one. To simplify the description, let $d_1=y-x_{lm}^{l}$, $d_2 = (x_{lm}^{r} + x_{rm}^{r})/2$, and $d_3 = d(y, x_{lm}^{r})$. We just use $y^*$ for $(x_{lm}^{r} + x_{rm}^{r})/2$. 
    \paragraph{Subcase 1.} $y\le x_{lm}^{r}$
    First we have the maximum cost achieved by $y^*$ is at least
        \begin{equation*} 
            \mc(y^*,\mathbf{r}) \ge c_{lm}^{l}(y^*,\mathbf{r}) = d_1+d_2+d_3.
        \end{equation*}
        Let $\alpha_{r}(|G_{r}|-1)$ be $c_{r}$, we also have the maximum cost achieved by $y^*$ is at least
        \begin{equation*}
            \mc(y^*,\mathbf{r}) \ge c_{rm}^{r}(y^*,\mathbf{r}) \ge d_2 + c_{r}(1-d_2).
        \end{equation*}

        From $y^*$ to $x_{lm}^{r}$, the cost of agent $x_{rm}^{r}$ will increase by the amount of at most $(d_2+c_{r}d_2)$, since there are at most $|G_{r}|-1$ agents at $x_{lm}^{r}$.  From $x_{lm}^{r}$ to $y$, the cost of agent $x_{rm}^{jr}$ will increase by the amount of at most $(d_3-c_{r}d_3)$, since all agents in $G_{r}$ are at or on the right of $x_{lm}^{r}$. Hence, the difference between the maximum cost achieved by $y$ and $y^*$ is at most
        \begin{equation*}
            \mc(y,\mathbf{r}) \le \mc(y^*,\mathbf{r}) + d_3 + d_2  + c_{r}(d_2-d_3).
        \end{equation*}

        Then, we have the approximation ratio
        \begin{equation*}
            \rho = \frac{\mc(y,\mathbf{r})}{\mc(y^*,\mathbf{r})} \le 1+\frac{d_3 + d_2  + c_{r}(d_2-d_3)}{d_1+d_2+d_3}.
        \end{equation*}
        Since $y$ is the middle point, we have $d_1\ge d_3+2d_2$. Then we have 
        \begin{equation} \label{eq: rho*3}
            \rho = \frac{\mc(y,\mathbf{r})}{\mc(y^*,\mathbf{r})} \le 1+\frac{d_3 + d_2  + c_{r}(d_2-d_3)}{3d_2+2d_3},
        \end{equation}
        Moreover, we have
        \begin{equation} \label{eq: rho*4}
            \rho = \frac{\mc(y,\mathbf{r})}{\mc(y^*,\mathbf{r})} \le 1+\frac{d_3 + d_2  + c_{r}(d_2-d_3)}{d_2 + c_{r}(1-d_2)}.
        \end{equation}
        One can verify that Equation \ref{eq: rho*3} is monotonically decreasing with respect to $d_3$ and Equation \ref{eq: rho*4} is monotonically increasing with respect to $d_3$. The equality holds when $d_3=(c_r-c_rd_2-2d_2)/2$. By plugging $d_3$ into both equations, we have the approximation ratio
        \begin{align*}
            \rho & =1+\frac{c_{r}^2d_2 + c_{r}  + 3c_{r}d_2-c_{r}^2}{2d_2 + 2c_{r}(1-d_2)} \le 1+\frac{c_{r}^3 - 5c_{r}^2  + 5c_{r}}{2},
        \end{align*}
        which is monotonically increasing with respect to $d_2$. Note that $d_1+d_3+2d_2\le 1$ and $d_1\ge d_3+2d_2$, we have $d_2\le (1-c_r)/(2-c_r)$. By plugging $d_2$ into the equation, we have the approximation ratio which is only related to $c_{r}$. By carefully calculation, the approximation ratio reaches the maximum of $\frac{29+20\sqrt{10}}{54}$ when $c_{r}=\frac{5-\sqrt{10}}{3}$.

        \paragraph{Subcase 2.} $y> x_{lm}^{r}$. We can use an anslysis similar to Subcase 1 to show the approximation ratio of at most $14/9$.
\end{proof}
Next, we will show the approximation ratio in this setting.

\begin{theorem}~\label{thm:8}
    Any deterministic mechanism has an approximation ratio of at least $2(1+2\sqrt{2})/7$ for minimizing the maximum cost when locations are public and group memberships are private.
\end{theorem}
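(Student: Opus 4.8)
The plan is to prove the lower bound by a two-instance argument built on strategyproofness under private group memberships, in the same spirit as the earlier lower bounds (Theorems~\ref{thm:lbp} and~\ref{thm:pp2}), but calibrated so that a minimax over the mechanism's forced output, followed by a maximization over the externality parameter, produces the constant $2(1+2\sqrt2)/7$. I would place three agents with \emph{public} locations $x_a=x_b=0$ and $x_c=1$, and consider two profiles differing only in the group label reported by agent $b$: in $P_1$ agents $a,b$ form one group with externality factor $\alpha$ (so $\alpha(|G|-1)=\alpha\le 1$, satisfying Consistency of Aversion) while $c$ is a singleton; in $P_2$ all three agents are singletons. Since $b$ can transform $P_1$ into $P_2$ (and back) purely by relabelling its group without altering any public location, strategyproofness applies directly to this single unilateral deviation.

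The first step is to pin down the mechanism's behaviour. Writing $y_1=f(P_1)$ and $y_2=f(P_2)$, agent $b$'s cost is $c_b^{(1)}(y)=y+\alpha(1-y)$ in $P_1$ and $c_b^{(2)}(y)=y$ in $P_2$; both are strictly increasing in $y$ (using $\alpha<1$). Strategyproofness in each true profile then yields $y_1\le y_2$ and $y_2\le y_1$, forcing $y_1=y_2=:y$. This is the crux of the construction: the mechanism is compelled to answer both instances with a single point even though their optima diverge.

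Next I would compute the two optima. In $P_1$ the costs are $c_a=c_b=y+\alpha(1-y)$ and $c_c=1-y$, so the maximum cost is minimized at $y_1^\ast=(1-\alpha)/(2-\alpha)$ with value $1/(2-\alpha)$; in $P_2$ the maximum cost $\max\{y,1-y\}$ is minimized at $y_2^\ast=1/2$ with value $1/2$. Expressing the two approximation ratios as functions of the forced output,
\begin{align*}
\rho_1(y)&=(2-\alpha)\max\{y+\alpha(1-y),\,1-y\},\\
\rho_2(y)&=2\max\{y,\,1-y\},
\end{align*}
I would argue the mechanism's best $y$ must lie in $[y_1^\ast,1/2]$ (outside this interval both ratios strictly worsen), where $\rho_1$ is increasing and $\rho_2$ decreasing, so the minimax is attained where they cross. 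Solving $\rho_1(y)=\rho_2(y)$ and substituting back yields the minimax value $V(\alpha)=(4-2\alpha)/(\alpha^2-3\alpha+4)$.

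The final step is to select the adversarial parameter: maximizing $V(\alpha)$ over $\alpha\in[0,1]$ leads to $\alpha^2-4\alpha+2=0$, i.e.\ $\alpha=2-\sqrt2$, at which $V(\alpha)=2(1+2\sqrt2)/7$, the claimed bound. I expect the main obstacle to be not the algebra but the structural bookkeeping that makes the bound genuine: one must confirm (i) that $b$'s relabelling is a feasible unilateral misreport turning $P_1$ into $P_2$ while leaving all public locations fixed, (ii) that the strict monotonicity of $c_b^{(1)}$ and $c_b^{(2)}$ forces $y_1=y_2$ rather than merely constraining them, and (iii) that the mechanism cannot evade the crossing-point analysis by choosing $y$ outside $[y_1^\ast,1/2]$. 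Verifying the Consistency of Aversion requirement $\alpha=2-\sqrt2\le 1$ then closes the argument.
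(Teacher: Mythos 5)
Your proposal is correct: the three-agent instance, the forced equality $y_1=y_2$, and the minimax computation all check out (I verified that the crossing of $\rho_1$ and $\rho_2$ gives $V(\alpha)=(4-2\alpha)/(\alpha^2-3\alpha+4)$, maximized at $\alpha=2-\sqrt2\le 1$ with value $2(1+2\sqrt2)/7$, and that $\alpha<1$ makes both of $b$'s cost functions strictly increasing so the two strategyproofness inequalities really do pin down a single output). The paper proves the same bound with the same two-profile adversarial skeleton but different mechanics: it uses $n/2$ agents at $0$ and $n/2$ at $1$, puts the externality $\alpha_2(n/2-1)=2-\sqrt2$ on the cluster at $1$ (so the optimum shifts to $\sqrt2/2$), and, because transforming one profile into the other requires \emph{all} agents at $1$ to relabel simultaneously, it must invoke Lemma~\ref{lem:partial group SP} (coalitional deviations by co-located agents) and then argue by contradiction that the two ``good'' output ranges $[1/2,(1+2\sqrt2)/7)$ and $((1+2\sqrt2)/7,\,\cdot\,)$ are disjoint. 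Your version buys two things: the deviation is a genuine unilateral misreport (only agent $b$ relabels, since the group $\{a,b\}$ has just two members), so plain strategyproofness suffices and no coalition lemma is needed; and forcing $y_1=y_2$ outright lets you present the bound as an explicit minimax over the common output followed by an optimization over $\alpha$, which also explains where the constant $2-\sqrt2$ comes from rather than positing it. The paper's version, in exchange, exhibits the bound with arbitrary $n$ and only needs one direction of the deviation. Both are valid; yours is self-contained and arguably cleaner.
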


\subsection{Private Locations and Group Memberships}
When both locations and group memberships are private, Lof-M and Mid-M are no longer strategyproof. Hence, we just use Left-M in this setting. Moreover, the lower bound is at least $2$, since misreporting only the location is a special case of misreporting both.

\section{Conclusion and Future Work}~\label{sec:conclusion}
We investigate mechanism design for congested facility location, where agents are partitioned into groups based on conflicting interests, and their costs increase due to congestion when the facility is closer to their competitors. We address three misreporting settings---location only, group membership only, and both---and propose strategyproof mechanisms that achieve near-tight approximation bounds for minimizing the social cost (\(\soc\)) and maximum cost (\(\mc\)).

Our analysis opens several avenues for future research. First, incorporating inter-group congestion could model scenarios where all agents compete for limited resources, enhancing the model’s applicability. Second, exploring group fairness objectives, such as minimizing the maximum group cost, could promote equitable resource allocation. Finally, allowing agents to belong to multiple groups would better capture complex affiliations in social or distributed systems.
\clearpage

\printbibliography

\clearpage
\appendix
\section{Missing Proofs}
\subsection*{Proof of \Cref{pro:aversion}}
The cost function \( c_i(y, \mathbf{r}) = d(y, x_i) + \alpha_{g_i} \sum_{k \in G_{g_i} \setminus \{i\}} (1 - d(y, x_k)) \) satisfies the \emph{Consistency of Aversion} property if and only if \( \alpha_j \leq 1/(|G_j|-1) \) for each group \( j \in [m] \).
\begin{proof}
Fix a group $G_j$, and consider an agent $i \in G_j$ located at $x_i$. Define the cost function as:
\begin{align*}
  c_i(y) &= d(y, x_i) + \alpha_j \sum_{k \in G_j \setminus \{i\}} (1 - d(y, x_k)) \\&= |y - x_i| + \alpha_j \sum_{k \in G_j \setminus \{i\}} (1 - |y - x_k|).
\end{align*}

To satisfy the \emph{Consistency of Aversion} property, the cost $c_i(y)$ must be non-decreasing as $y$ moves from $x_i$ to the competing group members. Let us compute the derivative of $c_i(y)$ with respect to $y$. For $y \notin \{x_i\} \cup \{x_k : k \in G_j \setminus \{i\} \}$, $c_i(y)$ is differentiable, and its derivative is:
\[
\frac{d}{dy} c_i(y) = \operatorname{sign}(y - x_i) - \alpha_j \sum_{k \in G_j \setminus \{i\}} \operatorname{sign}(y - x_k),
\]
where $\operatorname{sign}(z) = 1$ if $z > 0$, $-1$ if $z < 0$.

We use
$L(y) = |\{k \in G_j \setminus \{i\} : x_k < y\}|$ to define the number of competitors to the left of $y$,
$R(y) = |\{k \in G_j \setminus \{i\} : x_k > y\}|$ to define the number of competitors to the right of $y$,
so that
\[
\sum_{k \in G_j \setminus \{i\}} \operatorname{sign}(y - x_k) = L(y) - R(y).
\]

Now consider two case:

\textbf{Case 1: $\operatorname{sign}(y - x_i) = 1$.}  We want:
\begin{equation*}
    \frac{d}{dy} c_i(y) = 1 - \alpha_j (L(y) - R(y))\geq1-\alpha(|G_j|-1) \geq 0.
\end{equation*}

To ensure non-negativity, we must have:
\[
\alpha_j \leq \frac{1}{|G_j| - 1}.
\]

\textbf{Case 2:$\operatorname{sign}(y - x_i) = -1$.} Then:
\[
\frac{d}{dy} c_i(y) = -1 - \alpha_j (L(y) - R(y))\leq-1 +\alpha_j (|G_j|-1)\leq 0.
\]

To ensure non-negativity, we must have:
\[
\alpha_j \leq \frac{1}{|G_j| - 1}.
\]

Therefore, the Consistency of Aversion condition only requires cost to be non-decreasing when the facility moves toward the competing agents. This leads to the necessary and sufficient condition:
\[
\alpha_j \leq \frac{1}{|G_j| - 1}.
\]
\end{proof}

\subsection*{Proof of Theorem~\ref{thm:med-social}}
    Median-point Mechanism is strategyproof when both locations and group memberships are private, and has an approximation ratio of $2$.

\begin{proof}
  Through the definition, we know that the ideal facility location for each agent is their own location. Med-M outputs the median agent locations of all agents. First, misreporting group memberships cannot change the facility location. Then, we consider misreporting locations. For the median agent, Med-M locates the facility at their favourite location, thus they has no incentive to misreport. For the agents on the left side of the facility location, they cannot move the facility location until they misreport their location to the right of the facility location, which makes the facility move farther away from them. Hence, they has no incentive to misreport. For the agent on the right side of the facility location, we can use a similar way to analysis them. Hence, Med-M is strategyproof. Next, we consider its approximation ratio.

  Without loss of generality we assume that the optimal facility location $y^*>y$ ($y^*=y$ implies the approximation ratio 1). Note that the social cost achieved by $y$ can be reformulated as
  \begin{align*}
      & \soc(y, \mathbf{r}) = \sum_{i\in N}\left (d(x_i,y)+\alpha_{g_i}\sum_{k\in G_{g_i};k\neq i}(1-d(y,x_k))\right )\\
      &= \sum_{i\in N}\Big( (1-\alpha_{g_i}(|G_{g_i}|-1))d(y,x_i)+\alpha_{g_i}(|G_{g_i}|-1)\Big ).
  \end{align*}
  Let the contribution of each agent $i$ to the social cost is $(1-\alpha_{g_i}(|G_{g_i}|-1))d(y,x_i)+\alpha_{g_i}(|G_{g_i}|-1)$, we have
  \begin{align*}
     &\soc(y,\mathbf{r})-\soc(y^*,\mathbf{r}) \\
      = & \sum_{i\in N} (w_i d(y,x_i) + C_i) - \sum_{i\in N} (w_i d(y^*,x_i) + C_i)\\
      = & \sum_{i\in N} w_i (d(y, x_i)-d(y^*, x_i)),
  \end{align*}
 where $w_i = 1-\alpha_{g_i}(|G_{g_i}|-1)$ and $C_i = \alpha_{g_i}(|G_{g_i}|-1)$.
 For agents at or on the right of $y$, we have $d(y, x_i)-d(y^*, x_i) \le y^*-y$, where the equality holds when agent $i$ is at or on the right of $y^*$. For agents on the left of $y$, we have $d(y, x_i)-d(y^*, x_i) = y-y^*$. Hence, we futher have
 \begin{align*}
     \soc(y,\mathbf{r})-\soc(y^*,\mathbf{r}) 
     & \le \sum_{x_i> y} w_i (y^*-y) + \sum_{x_i\le y} w_i (y-y^*)\\
     & = (\sum_{x_i> y} w_i - \sum_{x_i\le y} w_i)(y^*-y),
 \end{align*}

 Moreover, the social cost incurred by $y^*$ is at least
 \begin{align*}
     \soc(y^*, \mathbf{r}) &\ge\!\! \sum_{x_i\le y} \!(w_i d(x_i, y^*) \!+\! C_i) + \sum_{x_i> y} \!(w_i d(x_i, y^*) \!+\! C_i)\\
     &\ge\! \sum_{x_i\le y} \!(w_i (y^*-y) \!+\! C_i)+\sum_{x_i> y} \!C_i.
 \end{align*}
 
 Then we have the approximation ratio is at most
 \begin{align*}
    \rho \le & \frac{\soc(y^*,\mathbf{r})+(\sum_{x_i> y} w_i - \sum_{x_i\le y} w_i)(y^*-y)}{\soc(y^*,\mathbf{r})}\\
     = & \frac{\sum_{x_i> y} w_i(y^*-y)+\sum_{x_i\le y} C_i + \sum_{x_i> y} C_i}{\sum_{x_i\le y} w_i (y^*-y) +  \sum_{x_i\le y} C_i + \sum_{x_i> y} C_i}\\
     \le & \frac{\sum_{x_i> y} w_i+\sum_{x_i\le y} C_i + \sum_{x_i> y} C_i}{\sum_{x_i\le y} w_i +  \sum_{x_i\le y} C_i + \sum_{x_i> y} C_i}\\
     = & \frac{\sum_{x_i> y} 1+\sum_{x_i\le y} C_i}{\sum_{x_i\le y} 1 + \sum_{x_i> y} C_i}.
 \end{align*}

 Let $c_{\max} = \max_{j \in [m]}\alpha_j(|G_j|-1)$ and $c_{\min} = \min_{j \in [m]}\alpha_j(|G_j|-1)$. Since $y$ is the median location, the approximation ratio is at most
 \begin{align*}
     \rho &= \frac{\sum_{x_i> y} 1+\sum_{x_i\le y} C_i}{\sum_{x_i\le y} 1 + \sum_{x_i> y} C_i} \le \frac{\frac{n}{2}+\frac{n}{2}c_{\max}}{\frac{n}{2}+\frac{n}{2}c_{\min}}=\frac{1+c_{\max}}{1+c_{\min}}.
 \end{align*}

 Since $\alpha_j (|G_j|-1) \in [0,1]$, we have the approximation ratio of $2$.
\end{proof}



\subsection*{Proof of Theorem~\ref{thm:res}}

    Restructure Mechanism is strategyproof and optimal for minimizing the social cost.

\begin{proof}
    For each agent, the only way to change the facility location is to misreport their location to the opposite side of \( x_i^* \), which eventually leads the facility to be farther away from their true location. Hence, their cost increases after misreporting. Therefore, they has no incentive to misreport their location. 

    From Proposition~\ref{pro:opt} we have $x_i^*$ is the optimal facility location.

\end{proof}
\subsection*{Proof of Lemma~\ref{lem:partial group SP}}
  If a deterministic mechanism is strategyproof, then for any set of agents at the same location, each individual cannot benefit if they misreport their group memberships simultaneously.
\begin{proof}
    We denote the set of agents at the same location $x$ as $S\subseteq N$ with true profiles $r_1,...r_s$. Let $\mathbf{r}_{-S}$ denote a collection of true profiles of $n$ agents except for agents in $S$. Consider a series of profiles $\mathbf{r}^i (0\leq i\leq s)$ where $s-i$ agents with $r_{i+1},...,r_{s}$ in $S$ misreport their group memberships simultaneously. We have
  \begin{equation*}
      \begin{split}
            \mathbf{r}^i&=\{r_1,...,r_i,r'_{i+1},...,r'_s\}\cup \mathbf{r}_{-S},\\
            \mathbf{r}^{i\!-1}\!\!&=\{r_1,...,r_{i-1},r'_i,...,r'_s\}\cup \mathbf{r}_{-S}. 
      \end{split}
  \end{equation*}
  Therefore $\mathbf{r}^{i-1}$ could be regarded as the agent $i$ in $\mathbf{r}^i$ misreporting the profile to $r'_i$. From the definition of strategyproofness, we have $c_i(f(\mathbf{r}^i), \mathbf{r}^i) \le c_i(f(\mathbf{r}^{i-1}), \mathbf{r}^i)$, implying that $d(f(\mathbf{r}^i), x) \le d(f(\mathbf{r}^{i-1}), x)$. 
  By integrating all the inequalities achieved by $i\in[0,s]$, we further have 
  \begin{equation*}
  \begin{aligned}
      d(f(\mathbf{r}^s), x) \le d(f(\mathbf{r}^{s-1}), x) \le \cdots \le d(f(\mathbf{r}^{0}), x),
    \end{aligned}
  \end{equation*}
  which completes the proof.
\end{proof}
\subsection*{Proof of Theorem~\ref{thm:lbp}}
    Any deterministic mechanism which outputs the facility at an agent location has an approximation ratio at least $2$ for minimizing the social cost when locations are public and group memberships are private.

\begin{proof}
    Given any strategyproof mechanism, consider a profile set $\mathbf{r}$ where all the agents' locations $x_1=\cdots=x_{\frac{n}{2}}=0$ and $x_{\frac{n}{2}+1}=\cdots=x_n=1$ ($n$ is an even number). Suppose that all agents belong to $G_1$ where $\alpha_1=0$. Due to the symmetry, we assume that the mechanism will output $y=1$.
    Now we consider another profile set $\mathbf{r}'$ where for all $\frac{n}{2}$ agents at $1$, $\alpha_2=1/(n/2-1)$. In this profile set, the optimal location is $0$. To achieve the approximation ratio less than $2$, the mechanism should output $y = 0$. Hence, agents at $1$ in $\mathbf{r}'$ can benefit by misreporting their groups together, which makes the facility from $0$ to $1$. From Lemma~\ref{lem:partial group SP} we have that mechanism is not strategyproof. 
\end{proof}
\subsection*{Proof of Lemma~\ref{lem:partial group SP location}}
 If a deterministic mechanism is strategyproof, then for any set of agents at the same location, each individual cannot benefit if they misreport their locations simultaneously.
\begin{proof}
    We denote the set of agents at the same location $x$ as $S\subseteq N$ with true profiles $r_1,...r_s$. Let $\mathbf{r}_{-S}$ denote a collection of true profiles of $n$ agents except for agents in $S$. Consider a series of profiles $\mathbf{r}^i (0\leq i\leq s)$ where $s-i$ agents with $r_{i+1},...,r_{s}$ in $S$ misreport their locations simultaneously. We have
  \begin{equation*}
      \begin{split}
            \mathbf{r}^i&=\{r_1,...,r_i,r'_{i+1},...,r'_s\}\cup \mathbf{r}_{-S},\\
            \mathbf{r}^{i\!-1}\!\!&=\{r_1,...,r_{i-1},r'_i,...,r'_s\}\cup \mathbf{r}_{-S}. 
      \end{split}
  \end{equation*}
  Therefore $\mathbf{r}^{i-1}$ could be regarded as the agent $i$ in $\mathbf{r}^i$ misreporting the profile to $r'_i$. From the definition of strategyproofness, we have $c_i(f(\mathbf{r}^i), \mathbf{r}^i) \le c_i(f(\mathbf{r}^{i-1}), \mathbf{r}^i)$, implying that $d(f(\mathbf{r}^i), x) \le d(f(\mathbf{r}^{i-1}), x)$. 
  By integrating all the inequalities achieved by $i\in[0,s]$, we further have 
  \begin{equation*}
  \begin{aligned}
      d(f(\mathbf{r}^s), x) \le d(f(\mathbf{r}^{s-1}), x) \le \cdots \le d(f(\mathbf{r}^{0}), x),
    \end{aligned}
  \end{equation*}
  which completes the proof.
\end{proof}
\subsection*{Proof of Theorem~\ref{thm:pp2}}

    Any deterministic mechanism has an approximation ratio at least $2$ for minimizing the social cost when both locations and group memberships are private.

\begin{proof}
        Given any strategyproof mechanism, consider a profile set where all the agents' locations $x_1=\cdots=x_{\frac{n}{2}}=1/2-\epsilon$ and $x_{\frac{n}{2}+1}=\cdots=x_n=1/2+\epsilon$ ($n$ is an even number). Suppose that all agents belong to $G_1$ where $\alpha_1=0$. Due to the symmetry, we assume that the mechanism will output $y\in [1/2,1]$. We first show that the mechanism has to output $y\in [1/2, 1/2+\epsilon] = [x_2-\epsilon, x_2]$. To see this, if the mechanism output $y\in (1/2+\epsilon, 1]$. Agents at $0$ can benefit by misreporting their locations to $(1/2+\epsilon)$ together, which makes the facility to be located at $(1/2+\epsilon)$, which is closer to them.
        Now we consider another profile set where all $\frac{n}{2}$ agents at $x_1=1/2-\epsilon$ move to $x_1'=1/2-3\epsilon$. To achieve the strategyproofness, the mechanism has to output $y\in [1/2-3\epsilon, 1/2-2\epsilon] = [x_1',x_1'+\epsilon]$. By using a series of similar analysis, there is a profile set where $x_1=0$ or $x_2=1$, and $x_2-x_1 > \frac{1}{2}$, the mechanism outputs $y\in [x_1,x_1+\epsilon]\cup [x_2-\epsilon, x_2]$. Without loss of generality we assume that for the profile set $\mathbf{r}$, all the agents' locations $x_1=\cdots=x_{\frac{n}{2}}=0$ and $x_{\frac{n}{2}+1}=\cdots=x_n \ge 1/2+\epsilon$ ($n$ is an even number). Suppose that all agents belong to $G_1$ where $\alpha_1=0$. If the output of the mechanism is $y\in [x_2-\epsilon, x_2]$, consider another profile set $\mathbf{r}'$ where all agents at $x_2$ belong to group $G_2$ where $\alpha_2=1/(|G_2|-1)$. To achieve the approximation ratio of less than $2$, the mechanism has to output $y < x_2-\epsilon$. In this case, agents at $x_2$ in $\mathbf{r}'$ can benefit by misreporting their groups together, which makes the facility closer. From Lemma~\ref{lem:partial group SP} we have that mechanism is not strategyproof. If the output of the mechanism is $y\in [0, \epsilon]$, we can use a similar analysis to show the approximation ratio.
\end{proof}
\subsection*{Proof of Lemma~\ref{lem:single group}}

    Locating the facility at any point between the leftmost agent location and the rightmost agent location can achieve an approximation ratio of $2$, when there is only one group.

\begin{proof}
   Let $lm$ and $rm$ be the leftmost agent and the rightmost agent, respectively. Let the facility location be $y$. We will show that $c_{lm}(y,\mathbf{r}) > c_{i}(y,\mathbf{r})$ for any $x_{lm} < x_{i} \le y$ and $c_{rm}(y,\mathbf{r}) > c_{i}(y,\mathbf{r})$ for any $y < x_{i} \le x_{rm}$. Hence, no matter at which point between $x_{lm}$ and $x_{rm}$ the facility is located, the maximum cost is incurred by either the cost of the leftmost agent or the rightmost agent. To see this, we first show that $c_{lm}(y,\mathbf{r}) > c_{i}(y,\mathbf{r})$ for any $x_{lm} < x_{i} \le y$. To simplify the description, we directly use $\alpha$ and $G$ instead of $\alpha_j$ and $G_j$ since there is only one group.

    We will show that $c_i(x_{rm},\mathbf{r}) > c_{i'}(x_{rm},\mathbf{r})$ for any $x_i < x_{i'} \le x_{rm}$.
    
    The cost of $lm$ is
    \begin{equation*}
        c_{lm}(y,\mathbf{r})=y-x_{lm}+\alpha(1+x_{i}-y)+\alpha\sum_{k\ne lm,i}(1-d(y,x_k)).
    \end{equation*}
    The cost of $x_i$ is
    \begin{equation*}
        c_{i}(y,\mathbf{r})=y-x_{i}+\alpha(1+x_{lm}-y)+\alpha\sum_{k\ne lm,i}(1-d(y,x_k)).
    \end{equation*}
    Hence, we have
    \begin{equation*}
      c_{lm}(y,\mathbf{r})-  c_{i}(y,\mathbf{r})=(\alpha+1)(x_{i}-x_{lm})>0.
    \end{equation*}
    We can also use the similar analysis to show that $c_{rm}(y,\mathbf{r}) > c_{i}(y,\mathbf{r})$ for any $y < x_{i} \le x_{rm}$. Moreoer, we can conclude that the optimal facility location for minimizing the maximum cost is the location where the leftmost agent and the rightmost agent have the same cost. 
    Let  $c_{lm}(y^*,\mathbf{r})=c_{rm}(y^*,\mathbf{r})$, we have
    \begin{align*}
     & y^*\!-\!x_{lm}\!+\!\alpha(1-x_{rm}+y^*)\!+\!\alpha\sum_{k\ne lm,rm}(1-d(y^*,x_k))\\
    = &x_{rm}\!-\!y^*\!+\!\alpha(1+x_{lm}-y^*)\!+\!\alpha\sum_{k\ne lm,rm}(1-d(y^*,x_k)).
    \end{align*}

    Then, we have
    \begin{equation*}
    2y^*(1+\alpha_{g_i})=(1+\alpha_{g_i})(x_{lm}+x_{rm}) \text{ and } y^*=\frac{x_{lm}+x_{rm}}{2},
    \end{equation*}
    and the maximum cost achieved by $y^*$ is 
    \begin{equation*}
        c_{lm}(y^*,\mathbf{r})= \frac{x_{rm}-x_{lm}}{2} + \alpha(|G|-1)(1-\frac{x_{rm}-x_{lm}}{2}).
    \end{equation*}
    From the definition of the agent cost function, we have $c_{rm}(x_{lm}, \mathbf{r}) \ge c_{rm}(y, \mathbf{r})$ and $c_{lm}(x_{rm}, \mathbf{r}) \ge c_{lm}(y, \mathbf{r})$ when $x_{lm}\le y \le x_{rm}$. We only need to calculate
    \begin{equation*}
        \frac{\max\{c_{lm}(x_{rm}, \mathbf{r}), c_{rm}(x_{lm}, \mathbf{r})\}}{\max\{c_{lm}(y^*, \mathbf{r}), c_{rm}(y^*, \mathbf{r})\}}.
    \end{equation*}
    Due to the symmetry, let $c_{lm}(x_{rm}, \mathbf{r})$ be the larger one. Now we focus on the cost of agent $lm$. From $y^*$ to $x_{rm}$, the distance of agent $lm$ is increased by the amount of $x_{rm}-y^*$, the distance of agent $rm$ is decreased by the amount of $x_{rm}-y^*$, and there are at most $|G|-2$ other agents whose distances are decreased by the amount of at most $x_{rm}-y^*$. Hence, we have
    \begin{equation*}
        c_{lm}(x_{rm}, \mathbf{r})-c_{lm}(y^*, \mathbf{r}) \le \alpha(|G|-2)(x_{rm}-y^*).
    \end{equation*}

    Then approximation ratio is
    \begin{align*}
        \rho & = \frac{c_{lm}(x_{rm}, \mathbf{r})}{c_{lm}(y^*, \mathbf{r})}
        \le 1+\frac{\alpha(|G|-2)(x_{rm}-y^*)}{c_{lm}(y^*, \mathbf{r})}\\
        & \le 1 + \frac{\alpha(|G|-2)}{1+\alpha(|G|-1)(\frac{2}{x_{rm}-x_{lm}}-1)} \le 2,
    \end{align*}
    where the last equality holds when $x_{rm}-x_{lm}=1$.

    Now we will show that this bound is tight. Consider a profile set with a single group where one agent is located at $0$ and all the other $|G|-1$ agents are located at $1$. The maximum cost of $y^*=1/2$ is $1/2 + \alpha(|G|-1)(1-1/2)$ while the maximal of the maximum cost is $1+\alpha(|G|-1)$, which locates the facility at $1$. One can verify that the approximation ratio in this profile set is $2$.
\end{proof}
\subsection*{Proof of Theorem~\ref{thm:5}}

    Left-point mechanism is strategyproof when both locations and group memberships are private, and has an approximation ratio of 3.

\begin{proof}
For the strategyproofness, we conclude that in each case, any misreporting eventually leads the facility location to remain unchanged or move farther away from every agent’s location, which leads their cost to remain unchanged or increase. Hence, each agent has no incentive to misreport.

Now we are ready to show the approximation ratio of Left-M. If there is only one group, from Lemma~\ref{lem:single group} we have the approximation ratio of $2$. If there are more than two groups, let the group which the leftmost agent belongs to be $G_{lm}$ and the maximum cost achieved by $y$ is incurred by the agent $i$ in group $G_r$. Let the leftmost agent be ${lm}$ and the optimal facility location be $y^*$. Hence, we have $y^* > y$. First, the maximum cost achieved by $y^*$ is at least
\begin{equation*}
    \mc(y^*,\mathbf{r}) \ge c_{lm}(y^*,\mathbf{r}) \ge (y^*-y).
\end{equation*}

Then from $y$ to $y^*$, the cost of agent $i$ will increase by the amount of at most $(y^*-y)+\alpha_r(|G_r|-1)(y^*-y)$. Hence, we have
\begin{align*}
    \mc(y,\mathbf{r}) &\le c_i(y^*,\mathbf{r}) + (y^*-y)+\alpha_r(|G_r|-1)(y^*-y) \\
    &\le \mc(y^*,\mathbf{r}) + (y^*-y)+\alpha_r(|G_r|-1)(y^*-y).
\end{align*}

Therefore, the approximation ratio is at most
\begin{align*}
    \rho &= \frac{\mc(y,\mathbf{r})}{\mc(y^*,\mathbf{r})} \le 1+\frac{(y^*-y)+\alpha_r(|G_r|-1)(y^*-y)}{y^*-y} \\
    &\le 3.
\end{align*}

\end{proof}
\subsection*{Proof of Theorem~\ref{thm:lof}}

    Last-leftmost or First-rightmost Mechanism is strategyproof and has an approximation ratio of $17/8$ when locations are private and group memberships are public.

\begin{proof}
    If $y=x_{lm}^*$, suppose that this group leftmost agent is in group $j^*$. For agents in group $j^*$, they can only move $x_{lm}^*$ to the left by misreporting, implying that the facility will not move to be closer to them. Hence, they have no incentive to misreport the location. For agents not in group $j^*$, they can only move their group leftmost agent location to the left or their group rightmost agent location to the right, which cannot move the facility. Hence, they have no incentive to misreport. If $y=x_{rm}^*$, we can also use a similar analysis to show the strategyproofness. Next, we will show the approximation ratio by case. Let $y^*$ be the optimal facility location, we will start from easier cases to the most difficult ones.

    \paragraph{Case 1: $y=x^*_{lm}$ and $y>y^*$.} \ \\
    \begin{center}
    \begin{tikzpicture}[scale = 2]
        \tikzstyle{every node}=[font=\footnotesize]
            \filldraw[black] (1/2,0) circle (0.5pt);
            \filldraw[black] (3/2,0) circle (0.5pt);
            \node at (0,-1/10) {$0$};
            \node at (1/2,-1/10) {$y^*$};
            \node at (3/2,-1/10) {$y(x^*_{lm})$};
            \node at (2,-1/10) {$1$};           
            \draw [-](0,0)--(2,0);
    \end{tikzpicture}
    \end{center}
    In this case, we can show that the maximum cost achieved by $y$ is incurred by a group leftmost agent. For the sake of contradiction, let the maximum cost achieved by $y$ be incurred by a group rightmost agent. Due to the definition of $x^*_{lm}$, each group rightmost agent is either at or on the right of $y$. Since $y^*$ is on the left of $y$, the cost of that group rightmost agent with respect to $y$ is at most of that with respect to $y^*$. However, $y^*$ is the optimal facility location, which makes a contradiction. Suppose that the maximum cost achieved by $y$ is incurred by $x^{j'}_{lm}$. Due to the definition of $x^*_{lm}$, we have $x^{j'}_{rm} \ge x^*_{lm}$. Hence, $y$ is within the interval $[x^{j'}_{lm}, x^{j'}_{rm}]$, and the maximum cost achieved by $y^*$ is at least the maximum cost among $x^{j'}_{lm}$ and $x^{j'}_{rm}$. From Lemma~\ref{lem:single group} we have the approximation ratio of $2$.

    \paragraph{Case 2: $y=x^*_{lm}$ and $y<y^*$.} \ \\
    \begin{center}
    \begin{tikzpicture}[scale = 2]
        \tikzstyle{every node}=[font=\footnotesize]
            \filldraw[black] (1/2,0) circle (0.5pt);
            \filldraw[black] (3/2,0) circle (0.5pt);
            \node at (0,-1/10) {$0$};
            \node at (3/2,-1/10) {$y^*$};
            \node at (1/2,-1/10) {$y(x^*_{lm})$};
            \node at (2,-1/10) {$1$};           
            \draw [-](0,0)--(2,0);
    \end{tikzpicture}
    \end{center}
    In this case, we can show that the maximum cost achieved by $y$ is incurred by a group rightmost agent. For the sake of contradiction, let the maximum cost achieved by $y$ be incurred by a group leftmost agent at $x^{j'}_{lm}$. Since $y^*$ is on the right of $y$, the cost of that agent with respect to $y$ is at most of that with respect to $y^*$. However, $y^*$ is the optimal facility location, which makes a contradiction. Suppose that the maximum cost achieved by $y$ is incurred by $x^{j'}_{rm}$. Due to the definition of $x^*_{lm}$, we have $x^{j'}_{lm} \le x^*_{lm}$. Hence, $y$ is within the interval $[x^{j'}_{lm}, x^{j'}_{rm}]$, and the maximum cost achieved by $y^*$ is at least the maximum cost among $x^{j'}_{lm}$ and $x^{j'}_{rm}$. From Lemma~\ref{lem:single group} we have the approximation ratio of $2$.

    \paragraph{Case 3: $y=x^*_{rm}$ and $y>y^*$.}\ \\
        \begin{center}
    \begin{tikzpicture}[scale = 2]
        \tikzstyle{every node}=[font=\footnotesize]
            \filldraw[black] (1/2,0) circle (0.5pt);
            \filldraw[black] (1,0) circle (0.5pt);
            \node at (0,-1/10) {$0$};
            \node at (1/2,-1/10) {$y^*$};
            \node at (1,-1/10) {$y(x^*_{rm})$};
            \node at (2,-1/10) {$1$};           
            \draw [-](0,0)--(2,0);
    \end{tikzpicture}
    \end{center}
    In this case, we can show that the maximum cost achieved by $y$ is incurred by a group leftmost agent. For the sake of contradiction, let the maximum cost achieved by $y$ be incurred by a group rightmost agent. Due to the definition of $x^*_{rm}$, each group rightmost agent is either at or on the right of $y$. Since $y^*$ is on the left of $y$, the cost of that group rightmost agent with respect to $y$ is at most of that with respect to $y^*$. However, $y^*$ is the optimal facility location, which makes a contradiction. Suppose that the maximum cost achieved by $y$ is incurred by $x^{j'}_{lm}$. Due to the definition of $x^*_{rm}$, we have $x^{j'}_{rm} \ge x^*_{rm}$. Hence, $y$ is within the interval $[x^{j'}_{lm}, x^{j'}_{rm}]$, and the maximum cost achieved by $y^*$ is at least the maximum cost among $x^{j'}_{lm}$ and $x^{j'}_{rm}$. From Lemma~\ref{lem:single group} we have the approximation ratio of $2$.

    \paragraph{Case 4: $y=x^*_{rm}$ and $y < y^*$.} 
    In this case, we can show that the maximum cost achieved by $y$ is incurred by a group rightmost agent. For the sake of contradiction, let the maximum cost achieved by $y$ be incurred by a group leftmost agent. Note that in this case, this agent can be on the right of $y$. Hence, we cannot use the previous analysis to deal with this case. If this group leftmost agent is on the left of $y$, we have the maximum cost achieved by $y$ is at most the maximum cost achieved by $y^*$ because of $y^*>y$, implying the approximation ratio of $1$. If this group leftmost agent is on the right of $y$, from the proof of Lemma~\ref{lem:single group} we know that the rightmost agent in the same group will incur a larger cost. Hence, the maximum cost achieved by $y$ can only be incurred by a group rightmost agent. Let this group be $G_{jr}$, the leftmost agent in $G_{jr}$ be $x_{lm}^{jr}$, and the rightmost agent in $G_{jr}$ be $x_{rm}^{jr}$ (for simplify the description, we will use $x_i$ to denote both agent $i$ and their location). If $y$ is at or on the right of $x_{lm}^{jr}$, from Lemma \ref{lem:single group} we have the approximation ratio of $2$. Now, we consider the case where $y$ is on the left of $x_{lm}^{jr}$.
    
    Let the group which the agent at $x^*_{rm}$ belongs to be group $G_{jl}$, the leftmost agent in $G_{jl}$ be $x_{lm}^{jl}$, and the rightmost agent in $G_{jl}$ be $x_{rm}^{jl}$. We first show that moving all agents in $G_{jl}$ to $y$ iteratively will not decrease the approximation ratio. To see this, the proof of Lemma~\ref{lem:single group} has already shown that the maximum cost can only be incurred by the leftmost agents and the rightmost agents in each group. Since $x_{lm}^{jl} \le x_{rm}^{jl} < y^*$. The maximum cost achieved by $y^*$ must not be incurred by $x_{rm}^{jl}$. Then we focus on agent $x_{lm}^{jl}$. Since agent $x_{rm}^{jl}$ is at $x^*_{rm}$, we do not need to move. We first move the second rightmost agent in $G_{jl}$ to $y$, which makes them closer to $y^*$. It will not decrease the maximum cost achieved by $y$ it is incurred by the $x_{rm}^{jr}$. Moreover, the maximum cost achieved by $y^*$ will not decrease since the cost of agent $x_{lm}^{jl}$ will decrease. Therefore, the approximation ratio will not decrease.
    Next, we will examine which agent incurs the maximum cost achieved by $y^*$. There are two possible cases, and we will analyze that the ratio in each of these cases does not exceed $17/8$.

    \textbf{Case 4-1.} $y^*$ is on the left of $x_{lm}^{jr}$. \\
    \begin{center}
    \begin{tikzpicture}[scale = 2]
        \tikzstyle{every node}=[font=\footnotesize]

            \filldraw ([xshift=-2pt,yshift=-0.75pt]0.6,0) rectangle ++(1.5pt,1.5pt);
            \filldraw[black] (2,0) circle (0.75pt);
            \filldraw[black] (2.5,0) circle (0.75pt);
            \node at (0,-0.15) {$0$};
            \node at (0.6,-0.15) {$(y, x^{jl}_{lm}, x^{jl}_{rm})$};
            \node at (1.5,-0.15) {$y^*$};
            \node at (2,-0.15) {$x^{jr}_{lm}$};
            \node at (2.5,-0.15) {$x^{jr}_{rm}$};
            \node at (3.5,-0.15) {$1$};           
            \draw [-](0,0)--(3.5,0);
    \end{tikzpicture}
    \end{center}
    In this case, we focus on the impact of agent $x_{lm}^{jl}$ and agent $x_{rm}^{jr}$ on the maximum cost achieved by $y^*$. We will increase the approximation ratio by minimizing the maximum cost achieved by $y$. Since all agents in $G_{jl}$ are at the same location, the cost of agent $x_{lm}^{jl}$ is
        \begin{align*}
             c_{lm}^{jl}(y^*, \mathbf{r}) \!=\! d(y^*,x_{rm}^{jr})\!+\!\alpha_{jl}(|G_{jl}|\!-\!1)(1\!-\!d(y^*,x_{rm}^{jr}))\\
             = (1-\alpha_{jl}(|G_{jl}|-1))d(y^*,x_{rm}^{jr})+\alpha_{jl}(|G_{jl}|-1).
        \end{align*}

        Let $\alpha_{jl}(|G_{jl}|-1)$ be $c_{jl}\in [0,1]$. We further have
        \begin{align*}
            (1-c_{jl})d(y^*,x_{rm}^{jr})+c_{jl} \!-\! ((1-c'_{jl})d(y^*,x_{rm}^{jr})+c'_{jl}) \\
             = (1-d(y^*,x_{rm}^{jr}))(c_{jl}-c'_{jl}) \ge 0,
        \end{align*}
        where $c_{jl}>c'_{jl}$, implying that decreasing the value of $c_{jl}$ will not increase the maximum cost achieved by $y^*$ and not change the maximum cost achieved by $y$. Hence, the approximation ratio can only be larger by setting $c_{jl}=0$. Then we have $c_{lm}^{jl}(y^*, \mathbf{r}) = d(y^*,x_{rm}^{jr})$.
        Moreover, the cost of agent $x_{rm}^{jr}$ will increase the amount of at most $(y^*-y)-\alpha_{jr}(|G_{jr}|-1)(y^*-y)$ from $y^*$ to $y$ since all agents in $G_{jr}$ are on the right of $y^*$.Therefore, the approximation ratio is at most
        \begin{align*}
            \rho &\le \frac{c_{lm}^{jl}(y^*,\mathbf{r}) + (y^*-y)-\alpha_{jr}(|G_{jr}|-1)(y^*-y)}{c_{lm}^{jl}(y^*,\mathbf{r})}\\
            & \le \frac{2-\alpha_{jr}(|G_{jr}|-1)}{1} \le 2.
        \end{align*}
        
        \textbf{Case 4-2.} $y^*$ is on the right of $x_{lm}^{jr}$. \\
        \begin{center}
            \begin{tikzpicture}[scale = 2]
        \tikzstyle{every node}=[font=\footnotesize]

            \filldraw ([xshift=-2pt,yshift=-0.75pt]0.6,0) rectangle ++(1.5pt,1.5pt);
            \filldraw[black] (2,0) circle (0.75pt);
            \filldraw[black] (3,0) circle (0.75pt);
            \node at (0,-0.15) {$0$};
            \node at (0.6,-0.15) {$(y, x^{jl}_{lm}, x^{jl}_{rm})$};
            \node at (2,-0.15) {$x^{jr}_{lm}$};
            \node at (3,-0.15) {$x^{jr}_{rm}$};
            \node at (2.5,-0.15) {$y^*$};
            \node at (3.5,-0.15) {$1$};           
            \draw [-](0,0)--(3.5,0);

            \draw [decorate,decoration={brace,amplitude=5pt,mirror,raise=4ex}](0.6,0) -- (2,0) node[midway,yshift=-3em]{$d_1$};
            \draw [decorate,decoration={brace,amplitude=5pt,mirror,raise=4ex}](2,0) -- (2.5,0) node[midway,yshift=-3em]{$d_2$};
    \end{tikzpicture}
    \end{center}
        We consider the impact of agents $x_{lm}^{jl}$, $x_{lm}^{jr}$, and $x_{rm}^{jr}$ on the maximum cost achieved by $y^*$. First, we can use a similar analysis to show that setting $c_{jl}=0$ makes the approximation ratio larger. From Lemma~\ref{lem:single group} we know that the agents $x_{lm}^{jr}$ and $x_{rm}^{jr}$ will have the same cost with respect to the location $\frac{x_{lm}^{jr} + x_{rm}^{jr}}{2}$. 
        
        If the cost of agent $x_{lm}^{jl}$ is larger than the other two agents with respect to the location $\frac{x_{lm}^{jr} + x_{rm}^{jr}}{2}$. We can move $x_{lm}^{jr}$ to the left until it reaches $y$, or the cost of agent $x_{lm}^{jl}$ is equal to the other two agents with respect to the location $\frac{x_{lm}^{jr} + x_{rm}^{jr}}{2}$. After the movement, the approximation ratio will not decrease. To see this, one can see that the maximum cost achieved by $y^*$ is not incurred by $x_{lm}^{jr}$ before the movement is terminated. Moreover, the cost of $x_{rm}^{jr}$ with respect to $y^*$ will decrease. Hence, the maximum cost achieved by $y^*$ will not increase. Since $x_{lm}^{jr}$ will be closer to $y$, the maximum cost achieved by $y$ will not decrease. Therefore, the approximation ratio will not decrease. If the movement is terminated by reaching $y$, we can use Lemma~\ref{lem:single group} to show an approximation ratio $2$. 
        
        Next, we consider the case where the movement is terminated by the other condition. In this case, all three agents $x_{lm}^{jl}$, $x_{lm}^{jr}$, $x_{rm}^{jr}$ have the same cost with respect to the location $(x_{lm}^{jr} + x_{rm}^{jr})/2$. Then we focus on the approximation ratio between the maximum cost achieved by $y$ and $(x_{lm}^{jr} + x_{rm}^{jr})/2$ since the approximation ratio between the maximum cost achieved by $y$ and $y^*$ is less or equal to the former one. To simplify the description, let $d_1=x_{lm}^{jr}-x_{lm}^{jl}$ and $d_2 = (x_{lm}^{jr} + x_{rm}^{jr})/2$, and we just use $y^*$ for $(x_{lm}^{jr} + x_{rm}^{jr})/2$. First we have the maximum cost achieved by $y^*$ is at least
        \begin{equation*} 
            \mc(y^*,\mathbf{r}) \ge c_{lm}^{jl}(y^*,\mathbf{r}) = d_1+d_2.
        \end{equation*}
        Let $\alpha_{jr}(|G_{jr}|-1)$ be $c_{jr}$, we also have the maximum cost achieved by $y^*$ is at least
        \begin{equation*}
            \mc(y^*,\mathbf{r}) \ge c_{rm}^{jr}(y^*,\mathbf{r}) \ge d_2 + c_{jr}(1-d_2).
        \end{equation*}

        From $y^*$ to $x_{lm}^{jr}$, the cost of agent $x_{rm}^{jr}$ will increase by the amount of at most $(d_2+c_{jr}d_2)$, since there are at most $|G_{jr}|-1$ agents at $x_{lm}^{jr}$.  From $x_{lm}^{jr}$ to $y$, the cost of agent $x_{rm}^{jr}$ will increase by the amount of at most $(d_1-c_{jr}d_1)$, since all agents in $G_{jr}$ are at or on the right of $x_{lm}^{jr}$. Hence, the difference between the maximum cost achieved by $y$ and $y^*$ is at most
        \begin{equation*}
            \mc(y,\mathbf{r}) \le \mc(y^*,\mathbf{r}) + d_1 + d_2 + c_{jr}(d_2-d_1).
        \end{equation*}

        Then, we have the approximation ratio
        \begin{equation} \label{eq: rho*1}
            \rho = \frac{\mc(y,\mathbf{r})}{\mc(y^*,\mathbf{r})} \le 1+\frac{d_1 + d_2 + c_{jr}(d_2-d_1)}{d_1+d_2},
        \end{equation}
        and 
        \begin{equation} \label{eq: rho*2}
            \rho = \frac{\mc(y,\mathbf{r})}{\mc(y^*,\mathbf{r})} \le 1+\frac{d_1 + d_2 + c_{jr}(d_2-d_1)}{d_2 + c_{jr}(1-d_2)}.
        \end{equation}

        One can verify that Equation \ref{eq: rho*1} is monotonically decreasing with respect to $d_1$ and Equation \ref{eq: rho*2} is monotonically increasing with respect to $d_1$. The equality holds when $d_1=c_{jr}(1-d_2)$. By plugging $d_1$ into both equations, we have the approximation ratio
        \begin{align*}
            \rho & = \le 1+\frac{c_{jr}(1-d_2) + d_2 + c_{jr}(d_2-c_{jr}(1-d_2))}{d_2 + c_{jr}(1-d_2)}\\
            & =  1 + \frac{(1+c^2_{jr})d_2+c_{jr}-c^2_{jr}}{(1-c_{jr})d_2+c_{jr}},
        \end{align*}
        which is monotonically increasing with respect to $d_2$. Note that $d_1+2d_2\le 1$, we have $d_2\le \frac{1-c_{jr}}{2-c_{jr}}$. By plugging $d_2$ into the equation, we have the approximation ratio which is only related to $c_{jr}$. By carefully calculation, the approximation ratio reaches the maximum of $17/8$ when $c_{jr}=1/4$.
\end{proof}

\subsection*{Proof of Theorem~\ref{thm:mpm}}

    Middle-point Mechanism is strategyproof, and has an approximation ratio of $(29+20\sqrt{10})/54$  for minimizing the maximum cost when locations are public.

\begin{proof}
    Since Mid-M does not use the group memberships, it satisfies the strategyproofness trivially. Hence, we focus on the approximation ratio. Let the optimal facility location be $y^*$. Without loss of generality, we assume that $y< y^*$, the leftmost agent is in group $G_{l}$, and the maximum cost achieved by $y$ is incurred by an agent in $G_r$. To simplify the description, we use $x_{lm}^j$ and $x_{rm}^j$ to denote the leftmost and the rightmost agent in group $G_j$. First, we can see that the maximum cost achieved by $y$ is incurred by agent $x_{rm}^r$. 
    Next, we focus on the impact of agent $x_{lm}^{l}$, $x_{rm}^{r}$ on the maximum cost achieved by $y^*$.  First, we can move all agents in $G_l$ to $x_{lm}^{l}$, where the approximation ratio will not decrease. To see this, the maximum cost achieved by $y$ still be the same. The cost of agent $x_{lm}^{l}$ with respect to $y^*$ will decrease, implying that the maximum cost achieved by $y^*$ will not increase. To see this, for agents at or on the left of $y^*$, they move far away from $y^*$. For agents on the right of $y^*$, the distance from $y^*$ is less than the distance from $y^*$ to $x_{lm}^{l}$ since $y < y^*$ is the middle point. Hence, they also move far away from $y^*$.
    
    If the cost of agent $x_{lm}^{l}$ is larger than the other two agents with respect to the location $(x_{lm}^{r} + x_{rm}^{r})/2$. We can use a similar analysis with Theorem~\ref{thm:lof} to move $x_{lm}^{jr}$ to the left until it reaches $x_{lm}^{l}$, or the cost of agent $x_{lm}^{l}$ is equal to the other two agents with respect to the location $(x_{lm}^{r} + x_{rm}^{r})/2$. After the movement, the approximation ratio will not decrease. To see this, one can see that the maximum cost achieved by $y^*$ is not incurred by $x_{lm}^{r}$ before the movement is terminated. Moreover, the cost of $x_{rm}^{r}$ with respect to $y^*$ will decrease. Hence, the maximum cost achieved by $y^*$ will not increase. Since $x_{lm}^{r}$ will be closer to $y$, the maximum cost achieved by $y$ will not decrease. Therefore, the approximation ratio will not decrease. If the movement is terminated by reaching $y$, we have $y \ge (x_{lm}^{r} + x_{rm}^{r})/2$ while $y^* \le (x_{lm}^{r} + x_{rm}^{r})/2$. Hence, we have $y=y^*$ and the approximation ratio is $1$.

    Next, we consider the case where the movement is terminated by the other condition. In this case, all three agents $x_{lm}^{l}$, $x_{lm}^{r}$, $x_{rm}^{r}$ have the same cost with respect to the location $(x_{lm}^{r} + x_{rm}^{r})/2$. Then we focus on the approximation ratio between the maximum cost achieved by $y$ and $(x_{lm}^{r} + x_{rm}^{r})/2$ since the approximation ratio between the maximum cost achieved by $y$ and $y^*$ is less or equal to the former one. To simplify the description, let $d_1=y-x_{lm}^{l}$, $d_2 = (x_{lm}^{r} + x_{rm}^{r})/2$, and $d_3 = d(y, x_{lm}^{r})$. We just use $y^*$ for $(x_{lm}^{r} + x_{rm}^{r})/2$. 
    \paragraph{Subcase 1.} $y\le x_{lm}^{r}$
    First we have the maximum cost achieved by $y^*$ is at least
        \begin{equation*} 
            \mc(y^*,\mathbf{r}) \ge c_{lm}^{l}(y^*,\mathbf{r}) = d_1+d_2+d_3.
        \end{equation*}
        Let $\alpha_{r}(|G_{r}|-1)$ be $c_{r}$, we also have the maximum cost achieved by $y^*$ is at least
        \begin{equation*}
            \mc(y^*,\mathbf{r}) \ge c_{rm}^{r}(y^*,\mathbf{r}) \ge d_2 + c_{r}(1-d_2).
        \end{equation*}

        From $y^*$ to $x_{lm}^{r}$, the cost of agent $x_{rm}^{r}$ will increase by the amount of at most $(d_2+c_{r}d_2)$, since there are at most $|G_{r}|-1$ agents at $x_{lm}^{r}$.  From $x_{lm}^{r}$ to $y$, the cost of agent $x_{rm}^{jr}$ will increase by the amount of at most $(d_3-c_{r}d_3)$, since all agents in $G_{r}$ are at or on the right of $x_{lm}^{r}$. Hence, the difference between the maximum cost achieved by $y$ and $y^*$ is at most
        \begin{equation*}
            \mc(y,\mathbf{r}) \le \mc(y^*,\mathbf{r}) + d_3 + d_2  + c_{r}(d_2-d_3).
        \end{equation*}

        Then, we have the approximation ratio
        \begin{equation*}
            \rho = \frac{\mc(y,\mathbf{r})}{\mc(y^*,\mathbf{r})} \le 1+\frac{d_3 + d_2  + c_{r}(d_2-d_3)}{d_1+d_2+d_3}.
        \end{equation*}
        Since $y$ is the middle point, we have $d_1\ge d_3+2d_2$. Then we have 
        \begin{equation} \label{eq: rho*3f}
            \rho = \frac{\mc(y,\mathbf{r})}{\mc(y^*,\mathbf{r})} \le 1+\frac{d_3 + d_2  + c_{r}(d_2-d_3)}{3d_2+2d_3},
        \end{equation}
        Moreover, we have
        \begin{equation} \label{eq: rho*4f}
            \rho = \frac{\mc(y,\mathbf{r})}{\mc(y^*,\mathbf{r})} \le 1+\frac{d_3 + d_2  + c_{r}(d_2-d_3)}{d_2 + c_{r}(1-d_2)}.
        \end{equation}
        One can verify that Equation \ref{eq: rho*3f} is monotonically decreasing with respect to $d_3$ and Equation \ref{eq: rho*4f} is monotonically increasing with respect to $d_3$. The equality holds when $d_3=(c_r-c_rd_2-2d_2)/2$. By plugging $d_3$ into both equations, we have the approximation ratio
        \begin{align*}
            \rho & =1+\frac{c_{r}^2d_2 + c_{r}  + 3c_{r}d_2-c_{r}^2}{2d_2 + 2c_{r}(1-d_2)} \le 1+\frac{c_{r}^3 - 5c_{r}^2  + 5c_{r}}{2},
        \end{align*}
        which is monotonically increasing with respect to $d_2$. Note that $d_1+d_3+2d_2\le 1$ and $d_1\ge d_3+2d_2$, we have $d_2\le (1-c_r)/(2-c_r)$. By plugging $d_2$ into the equation, we have the approximation ratio which is only related to $c_{r}$. By carefully calculation, the approximation ratio reaches the maximum of $\frac{29+20\sqrt{10}}{54}$ when $c_{r}=\frac{5-\sqrt{10}}{3}$.

        \paragraph{Subcase 2.} $y> x_{lm}^{r}$
        First we have the maximum cost achieved by $y^*$ is at least
        \begin{equation*} 
            \mc(y^*,\mathbf{r}) \ge c_{lm}^{l}(y^*,\mathbf{r}) = d_1+d_2-d_3.
        \end{equation*}
        Let $\alpha_{r}(|G_{r}|-1)$ be $c_{r}$, we also have the maximum cost achieved by $y^*$ is at least
        \begin{equation*}
            \mc(y^*,\mathbf{r}) \ge c_{rm}^{r}(y^*,\mathbf{r}) \ge d_2 + c_{r}(1-d_2).
        \end{equation*}

        From $y^*$ to $y$, the cost of agent $x_{rm}^{r}$ will increase by the amount of at most $(d_2-d_3+c_{r}(d_2-d_3))$, since there are at most $|G_{r}|-1$ agents in $|G_{r}|$ on the left of $y$. Hence, the difference between the maximum cost achieved by $y$ and $y^*$ is at most
        \begin{equation*}
            \mc(y,\mathbf{r}) \le \mc(y^*,\mathbf{r}) + d_2 - d_3  + c_{r}(d_2-d_3).
        \end{equation*}

        Then, we have the approximation ratio
        \begin{equation*}
            \rho = \frac{\mc(y,\mathbf{r})}{\mc(y^*,\mathbf{r})} \le 1+\frac{d_2 -d_3  + c_{r}(d_2-d_3)}{d_1+d_2-d_3}.
        \end{equation*}
        Since $y$ is the middle point, we have $d_1\ge 2d_2-d_3$. Then we have 
        \begin{equation} \label{eq: rho*5}
            \rho = \frac{\mc(y,\mathbf{r})}{\mc(y^*,\mathbf{r})} \le 1+\frac{d_2 -d_3  + c_{r}(d_2-d_3)}{3d_2-2d_3} \le \frac{1+c_r}{3},
        \end{equation}
        where the equality holds when $d_3=0$. Since $d_1\ge 2d_2-d_3$ and $d_1-d_3+2d_2\le 2$, we have
        \begin{equation} \label{eq: rho*6}
        \begin{aligned}
            \rho &= \frac{\mc(y,\mathbf{r})}{\mc(y^*,\mathbf{r})} 
            \le 1+\frac{d_2 -d_3  + c_{r}(d_2-d_3)}{d_2 + c_{r}(1-d_2)} \\
            &\le 1+ \frac{(1+c_r)d_2}{d_2+c_r(1-d_2)} \le 1+\frac{1+c_r}{1+3c_r},
        \end{aligned}
        \end{equation}
        where the equality holds when $d_3=0$ and $d_2=1/4$. Combining Equations \ref{eq: rho*5} and \ref{eq: rho*6},  the approximation ratio reaches the maximum of $14/9$ when $c_{r}=2/3$.
\end{proof}

\subsection*{Proof of Theorem~\ref{thm:8}}

    Any deterministic mechanism has an approximation ratio of at least $2(1+2\sqrt{2})/7$ for minimizing the maximum cost when locations are public and group memberships are private.

\begin{proof}
    Given any strategyproof mechanism, consider a profile set $\mathbf{r}$ where all the agents' locations $x_1=\cdots=x_{\frac{n}{2}}=0$ and $x_{\frac{n}{2}+1}=\cdots=x_n=1$ ($n$ is an even number). Suppose that all agents at $0$ belong to $G_1$ and all agents at $1$ belong to $G_1$ where $\alpha_1=0$. In this profile set, the optimal facility location is $1/2$. Due to the symmetry, we assume that the mechanism will output $y\in [1/2,1]$. To achieve the approximation ratio less than ($2(1+2\sqrt{2})/7$), the mechanism should output $y\in [1/2, (1+2\sqrt{2})/7)$. 
    Now we consider another profile set $\mathbf{r}'$ where for all $\frac{n}{2}$ agents at $1$, $\alpha_2=(2-\sqrt{2})/(n/2-1)$. In this profile set, the optimal location is $\sqrt{2}/2$. To achieve the approximation ratio less than ($2(1+2\sqrt{2})/7$), the mechanism should output $y > (1+2\sqrt{2})/7$. Hence, agents at $1$ in $\mathbf{r}$ can benefit by misreporting their groups together, which makes the facility closer to them. From Lemma~\ref{lem:partial group SP} we have that mechanism is not strategyproof. 
\end{proof}

\section{Missing Lower Bounds}
\begin{theorem}
    Any deterministic mechanism  has an approximation ratio of at least $3/2$ for minimizing the social cost when locations are public and group memberships are private.
\end{theorem}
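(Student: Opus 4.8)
The plan is to adapt the profile-switching arguments of Theorems~\ref{thm:lbp} and \ref{thm:pp2}, but to use a mirror-symmetric pair of instances connected through a common ``hub'' profile, so that the bound constrains \emph{every} strategyproof mechanism rather than only those outputting an agent location. Fix an even $n \ge 4$ and place $n/2$ agents at $0$ and $n/2$ agents at $1$ in all three profiles. In the hub profile $\mathbf{r}_A$ every agent lies in a single group $G_1$ with $\alpha_1 = 0$, so each $w_i = 1$ and $C_i = 0$. In $\mathbf{r}_B$ the agents at $1$ instead form a group $G_2$ with $\alpha_2 = 1/(n/2-1)$, so that $c := \alpha_2(|G_2|-1) = 1$, the largest value admissible under Consistency of Aversion (\Cref{pro:aversion}); those agents then have weight $w = 0$ and constant $C = 1$. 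The mirror profile $\mathbf{r}_{B'}$ instead places the agents at $0$ into such a group, leaving the agents at $1$ in $G_1$.

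Using the contribution form $\soc(y,\mathbf{r}) = \sum_i [w_i\, d(x_i,y) + C_i]$ from the proof of \Cref{thm:med-social}, I would compute directly that $\soc(y,\mathbf{r}_A) = n/2$ for all $y \in [0,1]$, so every point is optimal on the hub; that $\soc(y,\mathbf{r}_B) = (n/2)(y+1)$, which is increasing and hence minimized at $y^*=0$ with optimal value $n/2$ (consistent with \Cref{pro:opt}, since all the weight sits at $0$); and by symmetry $\soc(y,\mathbf{r}_{B'}) = (n/2)(2-y)$, optimal at $y^*=1$. Consequently, placing the facility at a point $y$ incurs ratio $y+1$ on $\mathbf{r}_B$ and ratio $2-y$ on $\mathbf{r}_{B'}$.

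The core of the argument is the strategyproofness constraint. Let $y_A,y_B,y_{B'}$ be the outputs of a deterministic strategyproof $f$. Since $\mathbf{r}_A$ and $\mathbf{r}_B$ differ only in the reported group of the agents at location $1$, \Cref{lem:partial group SP} applies: taking $\mathbf{r}_B$ as the true profile and $\mathbf{r}_A$ as the simultaneous misreport gives $d(y_B,1)\le d(y_A,1)$, and taking $\mathbf{r}_A$ as truth with $\mathbf{r}_B$ as the misreport gives the reverse, so $y_A = y_B$; symmetrically $y_A = y_{B'}$. Hence all three outputs coincide at a common $y$, and the realized ratio is at least $\max(y+1,\,2-y) \ge 3/2$, with equality only at $y=1/2$. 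Thus no strategyproof mechanism beats $3/2$ on this family. (A one-directional variant also works: if $y_A \ge 1/2$ use $\mathbf{r}_B$ and the single inequality $y_B \ge y_A$; if $y_A \le 1/2$ use $\mathbf{r}_{B'}$ and $y_{B'}\le y_A$.)

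The step I expect to require the most care is fixing the direction of \Cref{lem:partial group SP}: one must treat the positive-externality group as the \emph{true} report, with the agents misreporting it as the $\alpha=0$ group, so that the conclusion ``the facility cannot be pulled closer to their side'' forces the output toward the region that is costly for $\mathbf{r}_B$ or $\mathbf{r}_{B'}$. The remaining verifications---that $c=1$ is admissible at the boundary $\alpha_2 = 1/(|G_2|-1)$, and that the optimal locations and values are as claimed---are routine computations with the reformulated social cost.
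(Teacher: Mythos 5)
Your proposal is correct and follows essentially the same route as the paper's proof: the same hub profile with all agents in a zero-externality group, the same perturbed profile with $\alpha_2(|G_2|-1)=1$ pushing the optimum to an endpoint, and the same invocation of \Cref{lem:partial group SP} with the externality group as the true report to pin down the output. The only cosmetic difference is that you replace the paper's symmetry WLOG ($y\ge 1/2$) with an explicit mirror profile $\mathbf{r}_{B'}$; your one-directional variant is precisely the paper's argument.
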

\begin{proof}
    Given any strategyproof mechanism, consider a profile set $\mathbf{r}$ where all the agents' locations $x_1=\cdots=x_{\frac{n}{2}}=0$ and $x_{\frac{n}{2}+1}=\cdots=x_n=1$ ($n$ is an even number). Suppose that all agents belong to $G_1$ where $\alpha_1=0$. Due to the symmetry, we assume that the mechanism will output $y\ge 1/2$.
    Now we consider another profile set $\mathbf{r}'$ where for all $\frac{n}{2}$ agents at $1$, $\alpha_2=1/(n/2-1)$. In this profile set, the optimal location is $0$. To achieve the approximation ratio less than $1.5$, the mechanism should output $y < 1/2$. Hence, agents at $1$ in $\mathbf{r}'$ can benefit by misreporting their groups together, which makes the facility from $0$ to $1$. From Lemma~\ref{lem:partial group SP} we have that mechanism is not strategyproof. 
\end{proof}

\clearpage
\end{document}